\newtheorem{theorem}{Theorem}
\def\B{\mathscr{B}}
\def\P{\mathcal{P}}
\def\L{\mathcal{L}}
\def\A{\mathcal{A}}
\def\Y{\mathscr{Y}}
\def\R{\mathbb{R}}
\def\Z{\mathbb{Z}}
\begin{document}
\title{Approximation with one-bit polynomials in Bernstein form}
\date{\today}
\author{C. S{\.\i}nan G\"unt\"urk\footnote{NYU Courant Institute, email: \tt{gunturk@cims.nyu.edu}.}~ and Weilin Li\footnote{CUNY City College and NYU Courant Institute, email: \tt{wli6@ccny.cuny.edu},  \tt{weilinli@cims.nyu.edu}.} 
} 
\maketitle
\begin{center}
	{\em Dedicated to Professor Ron DeVore on the occasion of his 80th birthday.}
\end{center}

\begin{abstract}
We prove various theorems on approximation using polynomials with integer coefficients in the Bernstein basis of any given order. In the extreme, we draw the coefficients from $\{ \pm 1\}$ only. A basic case of our results states that for any Lipschitz function $f:[0,1] \to [-1,1]$ and for any positive integer $n$, there are signs $\sigma_0,\dots,\sigma_n \in \{\pm 1\}$ such that
$$\left |f(x) - \sum_{k=0}^n \sigma_k \, \binom{n}{k} x^k (1-x)^{n-k} \right | \leq \frac{C (1+|f|_{\mathrm{Lip}})}{1+\sqrt{nx(1-x)}} ~\mbox{ for all } x \in [0,1].$$  
More generally, we show that higher accuracy is achievable for smoother functions: For any integer $s\geq 1$, if $f$ has a Lipschitz $(s{-}1)$st derivative, then approximation accuracy of order $O(n^{-s/2})$ is achievable with coefficients in $\{\pm 1\}$ provided $\|f \|_\infty < 1$, and of order $O(n^{-s})$ with unrestricted integer coefficients, both uniformly on closed subintervals of $(0,1)$ as above. Hence these polynomial approximations are not constrained by the saturation of classical Bernstein polynomials. Our approximations are constructive and can be implemented using feedforward neural networks whose weights are chosen from $\{\pm 1\}$ only. 
\end{abstract}

\paragraph{Keywords} Bernstein polynomials, integer constraints, $\pm 1$ coefficients, sigma-delta quantization, noise shaping.
\paragraph{Mathematics Subject Classification} 41A10, 41A25, 41A29, 41A40, 42C15, 68P30.

\section{Introduction}

It is a classical result that a continuous function $f:[0,1] \to \R$ can be approximated uniformly by polynomials with integer coefficients if and only if $f(0)$ and $f(1)$ are integers. Here the integer coefficients are understood to be with respect to the default ``power basis,'' i.e. the mononomials $\{1,x,x^2,\dots\}$. The necessity of the condition is immediate. The sufficiency, on the other hand, is non-obvious at best, yet the following constructive proof by Kantorovich \cite{kantorovich1931} (see also \cite[Ch.2.4]{lorentz2}), is remarkably short and transparent:
Recall that the Bernstein polynomial of $f$ of order $n$ (and degree $\leq n$), defined by 
$$
 B_n(f,x):= B_n(f)(x):= \sum_{k=0}^n f\Big (\frac{k}{n} \Big ) \binom{n}{k} x^k (1-x)^{n-k},
$$
converges to $f$ uniformly. Set
\begin{equation}\label{quantized0}
 B_n^*(f,x):= B^*_n(f)(x) := \sum_{k=0}^n \left [ f\Big (\frac{k}{n}\Big ) \binom{n}{k} \right ] x^k (1-x)^{n-k},
\end{equation}
where $[u]\in \Z$ stands for any rounding of $u \in \R$ to an immediate neighboring integer value. It is evident that $B_n^*(f,x)$ is a polynomial with integer coefficients, and of degree at most $n$. With the assumption that $f(0)$ and $f(1)$ are integers, the total rounding error can be bounded uniformly over $x \in [0,1]$ via
\begin{equation}\label{rounding1}
 |B_n(f,x) - B_n^*(f,x)| \leq \sum_{k=1}^{n-1}  x^k (1-x)^{n-k} 
 \leq \frac{1}{n} \sum_{k=1}^{n-1} \binom{n}{k} x^k (1-x)^{n-k} < \frac{1}{n}
\end{equation}
which shows $B^*_n(f) \to f$ uniformly as well. 

Since accuracy of the Bernstein polynomial approximation saturates at the rate $1/n$ (unless $f$ is a linear polynomial), the rate at which $B_n^*(f)$ converges to $f$ is as good as that of $B_n(f)$. Kantorovich actually proved a stronger result in \cite{kantorovich1931}, showing that the error of best approximation to $f$ by polynomials of degree $n$ with integer coefficients is bounded by $2 E_n(f) + 1/n$ where $E_n(f)$ denotes the error of best approximation of $f$ by unconstrained polynomials of degree $n$. This can be shown by employing the polynomial of best approximation as a surrogate instead of the Bernstein polynomial (see \cite{lorentz1, ferguson1980}).

The history of approximation by polynomials with integer coefficients is rich, with the earliest result going back to P{\'a}l \cite{pal1914}, followed shortly by Kakeya \cite{kakeya1914} and Chlodovsky \cite{chlodovsky1925}. For an extensive treatment of the subject, we refer to Ferguson's classical text \cite{ferguson1980}. One of the important characteristics of the theory is that uniform approximation of continuous functions is only possible on intervals of length less than $4$, and then only with certain arithmetic constraints on $f$: We already saw that on $[0,1]$ it is necessary (and sufficient) that $f(0)$ and $f(1)$ are integers. On $[-\alpha,\alpha]$ where $\alpha < 1$, it is necessary (and sufficient) that $f(0)$ is an integer, whereas on $[-1,1]$, it is necessary (and sufficient) that $f(-1),f(0),f(1),\frac{1}{2}(f(-1){+}f(1))$ are all integers. The number of such arithmetic conditions increases without bound as the length of the interval increases towards $4$. In the other extreme, there are no arithmetic conditions when approximation is sought on closed intervals containing no integers. In the context of this paper we will assume that all uniform approximation takes place on a subinterval $[a,b] \subset (0,1)$. 

It is known that Kantorovich's method, which yields a residual error of $1/n$, is suboptimal. Indeed, for any $0<a<b<1$, the error of best uniform approximation to any $f\in C([a,b])$ by a degree $n$ polynomial with integer coefficients is bounded by $E_n(f) + \rho^n$, where $\rho:=\rho(a,b) < 1$ (see \cite{trigub1962}, \cite[Thm 11.15]{ferguson1980}). Consequently, in all traditional smoothness classes of finitely many derivatives, the constraint of integer coefficients does not hamper the rate at which these functions can be approximated uniformly by polynomials. It is, however, important to remember the constraint on the domain $[a,b]$.

This paper is concerned with approximations by polynomials with integer coefficients subject to additional, stringent conditions on what these integers can be, while still guaranteeing similar approximation properties. To explain what these stringent conditions are, we first need some additional notation:
By the Bernstein basis of order $n$, we refer to the list of polynomials $\B_n:=(p_{n,k})_{k=0}^n$
where
\begin{equation}\label{Bernsteinbasis}
p_{n,k}(x) := \binom{n}{k}x^k (1-x)^{n-k}.
\end{equation}
$\B_n$ is a basis of the vector space $\P_n$ of polynomials (of a real variable) of degree at most $n$.
Let us denote by $\B^\circ_n: = (p^\circ_{n,k})_{k=0}^n$ the plain (unnormalized) version of this basis, given by $p^\circ_{n,k}(x):= x^k(1-x)^{n-k}$. Let us also denote by $\Pi_n:=(\pi_k)_0^n$ the power basis of degree $n$ given by $\pi_k(x):=x^k$. Then $B_n^*(f)$, as defined in \eqref{quantized0}, can be viewed as an approximation to $f$ from the lattice $\L(\B^\circ_n)$ generated by $\B^\circ_n$, and of course, also from $\L(\Pi_n)$ as originally intended. In fact, $\L(\B^\circ_n) = \L(\Pi_n)$ since for each $k=0,\dots,n$, we have $p^\circ_{n,k} \in \L(\Pi_n)$ and $\pi_k \in \L(\B^\circ_n)$. The first claim is immediate, and the latter is seen by noting that
$$ x^k = x^k(x + (1-x))^{n-k} = \sum_{j=0}^{n-k} \binom{n-k}{j}x^{k+j}(1-x)^{n-k-j} = \sum_{l=k}^n \binom{n-k}{l-k} p^\circ_{n,l}(x).$$

Meanwhile, $\L(\B_n)$, which we will refer to as the {\em Bernstein lattice}, is a significantly smaller sublattice of $\L(\B^\circ_n)$ as $n$ increases, and therefore, approximation by its elements presents an increasingly coarser rounding (quantization) problem. Indeed, the fundamental cell of $\L(\B_n)$ contains $M_n:=\prod_{k=0}^n \binom{n}{k}$ elements of $\L(\B^\circ_n)$, and therefore, encoding the elements of $\L(\B^\circ_n)$ requires $\mu_n:=\frac{1}{n+1} \log_2 M_n$ times as many bits per basis polynomial as it would require for $\L(\B_n)$. It can be checked that $M_n$ grows as $\exp(cn^2)$, therefore $\mu_n$ grows linearly. 

To motivate the same point from an approximation perspective, let us inspect what happens to the total rounding error bound when we gradually coarsen the lattice $\L(\B^\circ_n)$ towards $\L(\B_n)$ while employing the same simple rounding algorithm: For any $\alpha \in [0,1]$, let $\Delta_{n,k}:=\Delta_{n,k}(\alpha)$ be the integer part of $\binom{n}{k}^{\alpha}$, and 
analogous to \eqref{quantized0}, consider 
the rounding of $f$ to the lattice generated by $\B^\alpha_n:=(\Delta_{n,k}p^\circ_{n,k})_{k=0}^n$ by means of the polynomial
\begin{equation}\label{quantized1}
B_n^{*,\alpha}(f,x):=\sum_{k=0}^n \left [ f\Big (\frac{k}{n}\Big ) \binom{n}{k} \Delta_{n,k}^{-1} \right ] \Delta_{n,k} x^k (1-x)^{n-k}.
\end{equation}
Noting that $\Delta_{n,0}=\Delta_{n,n}=1$, and again assuming that $f(0)$ and $f(1)$ are integers, the total rounding error is now bounded above by
\begin{eqnarray}
 \sum_{k=1}^{n-1}  \binom{n}{k}^{\alpha} x^k (1-x)^{n-k} 
 & = & \sum_{k=1}^{n-1} (p_{n,k}(x))^\alpha (p^\circ_{n,k}(x))^{1-\alpha} \nonumber \\
 & \leq & \left (\sum_{k=1}^{n-1} p_{n,k}(x) \right )^{\alpha}  \left (\sum_{k=1}^{n-1} x^k(1-x)^{n-k} \right )^{1-\alpha}
  \nonumber \\
 & < & n^{-1+\alpha},\nonumber
\end{eqnarray}
where in the last step we have used the findings of \eqref{rounding1}. 

This simple extension shows that it is possible to approximate from $\L(\B^\alpha_n)$ without effort for all $\alpha <1$, but the method breaks down at $\alpha=1$, i.e. for the Bernstein lattice $\L(\B_n)$. In this paper, our primary focus will be on enabling approximation from subsets of this lattice by means of a more advanced quantization method known as {\em noise-shaping quantization}.

To explain what this means, let us recall that each $p_{n,k}$ is a bump function that peaks at $k/n$. However, the Bernstein basis is poorly localized as a whole. As can be seen by the Laplace (normal) approximation to the binomial distribution, each $p_{n,k}$ spreads significantly over the neighboring $O(\sqrt{n})$ basis functions, making $\B_n$ behave approximately like a frame of redundancy $O(\sqrt{n})$. This heuristic suggests that there is numerical flexibility in the choice of coefficients when functions are approximated by linear combinations of the $p_{n,k}$; this flexibility then leads to the possibility of coarse quantization. 

In general, noise-shaping quantization refers to the principle of arranging the quantization noise (i.e. the quantization error of the coefficients) to be mostly invisible to the accompanying reconstruction operator. Also known as ``one-bit'' quantization due to its potential to enable very coarse quantization, noise-shaping quantization was first introduced for analog-to-digital conversion circuits during the 60s (\cite{IY,IYM62}) where it became established as Sigma-Delta modulation (or $\Sigma\Delta$ quantization), though some of its ideas can be found in the theory of Beatty sequences. $\Sigma\Delta$ quantization gained some popularity in information theory through the works of Gray et al (e.g. \cite{grayquantization}), but remained largely unknown in the general mathematics community until the groundbreaking work of Daubechies and DeVore \cite{DD}. Since then, the theory has been extended significantly to apply to various approximation problems with quantized coefficients. We refer the reader to \cite{exp_decay,BPY1,mathofAtoD,noiseshaping} for some of the recent mathematical evolution of the subject, and to \cite{Candy-Temes,NST96,SchTe04} for engineering applications.

In this paper, we will use methods of noise-shaping quantization to establish various results concerning approximation from the Bernstein lattices, including the following:
\begin{itemize}
 \item  The collection $\displaystyle \bigcup_{n=1}^\infty \L(\B_n)$ of all Bernstein lattices is dense in $L^p([0,1])$ for all $1 \leq p < \infty$, and in $C([a,b])$ for any $0< a < b < 1$. More precisely, in each of these spaces, the distance from $\L(\B_n)$ to any $f$ goes to $0$.
 
 \item For any $0< a < b < 1$, if $f \in C([a,b])$ can be approximated uniformly to within $O(n^{-s})$ by polynomials of degree $n$ with real coefficients, then it can also be approximated uniformly to within $O(n^{-s})$ by elements of $\L(\B_n)$. 

 \item Any $f \in \mathrm{Lip}([0,1])$ such that $\|f\|_\infty \leq 1$ can be approximated to within $O(n^{-1/2})$ by a polynomial with coefficients in $\{\pm 1\}$ in the Bernstein basis of order $n$.  The approximation rate improves to $O(n^{-s/2})$  if $f^{(s-1)} \in \mathrm{Lip}([0,1])$, provided $\|f\|_\infty < 1$. These approximations are uniform over any given $[a,b] \subset (0,1)$.
\end{itemize}

The paper is organized as follows: Section \ref{generalmethod} describes how the noise-shaping method of $\Sigma\Delta$ quantization works in connection with the Bernstein basis. Specific error bounds are given in Section \ref{specificbounds} where both $L^p$ spaces and smoothness classes are considered. This section also describes how iterated Bernstein operators enable higher order $\Sigma\Delta$ quantizers.

An unexpected and novel contribution of this paper is the effective use of noise-shaping quantization methods in the setting of {\em linearly independent} systems of vectors. To our knowledge, the Bernstein system constitutes the first example of this kind. In Appendix A, we quantify the sense in which the Bernstein basis $\B_n$ behaves like a frame of redundancy $\sqrt{n}$ by deriving the exact eigenvalue distribution of the associated frame operator. 

Another novel contribution of this work is computational: our one-bit polynomial approximations can be computed exactly by means of feedforward neural networks whose weights are chosen from $\{\pm 1\}$ only. We show how this is done in Appendix B.

\section{Approximation by quantized polynomials in Bernstein form}
\label{generalmethod}
All of our approximations via quantized polynomials in Bernstein form will follow a two-stage process. The first stage consists of classical polynomial approximation and the second stage is quantization through noise-shaping. For any given function $f:[0,1] \to \R$ in a suitable function class, we will first approximate it by a polynomial $P \in \P_n$ which has the representation
$$ P(x) = \sum_{k=0}^n y_k \, p_{n,k}(x) $$
with respect to $\B_n$. The polynomial $P$ could be the Bernstein polynomial $B_n(f)$ of $f$ when it is defined, but it can be a replacement, such as the Kantorovich polynomial of $f$ (\cite[Ch.10]{DL}), or it can also be a better approximant in $\P_n$, especially if $f$ has high order of smoothness. Both the quality of the approximation and the range of the coefficients $(y_k)_0^n$ will matter.
We define the {\em approximation error} of $f$ by
$$ E^A_n(f,P;x) := f(x) - P(x), $$
and will often suppress $P$ in the notation. 

The second stage consists of quantization of $P$ via its representation in the Bernstein basis. The coefficients $y:=(y_k)_0^n$ of $P$ will be replaced with their quantized version $q:=(q_k)_0^n$ taking values in a constrained set $\A$, called the (quantization) alphabet. We define the {\em quantization error} of $y$ by
$$ E^Q_n(y,q;x):=\sum_{k=0}^n (y_k - q_k)p_{n,k}(x), $$
and again, will often suppress $q$ in our notation.

\subsection{Noise-shaping through $\Sigma\Delta$ quantization:}

$\Sigma\Delta$ quantization (modulation) refers to a large family of algorithms designed to convert any given sequence $y:=(y_k)$ of real numbers to another sequence $q:=(q_k)$ taking values in a discrete set $\A$ (typically an arithmetic progression) in such a way that the error $y-q$ between them (i.e. the quantization error) is a ``high-pass'' sequence, i.e. it produces a small inner-product with any slowly varying sequence. The canonical way of ensuring this is to ask that $y$ and $q$ satisfy the $r$th order difference equation
\begin{equation} \label{sigmadeltaeq}
y - q = \Delta^r u, 
\end{equation}
where $(\Delta u)_k:= u_k - u_{k-1}$, for some bounded sequence $u$. We then say that $q$ is an $r$th order noise-shaped quantization of $y$. When \eqref{sigmadeltaeq} is implemented recursively, it means that each $q_k$ is found by means of a ``quantization rule'' of the form
\begin{equation} \label{quantrule}
 q_k = F(u_{k-1},u_{k-2},\dots,y_k,y_{k-1},\dots),
\end{equation}
and $u_k$ is updated via
\begin{equation} \label{sigmadeltaupdate}
 u_k = \sum_{j=1}^r (-1)^{j-1} \binom{r}{j} u_{k-j} + y_k - q_k
\end{equation}
to satisfy \eqref{sigmadeltaeq}.
This recursive process is commonly called ``feedback quantization'' due to the role $q_k$ plays as a feedback control term in \eqref{sigmadeltaupdate}. The role of the quantization rule \eqref{quantrule} is to keep the solution $u$ bounded for all input sequences $y$ of arbitrary duration in a given set $\Y$. In this case, we say that the quantization rule is stable for $\Y$.

The ``greedy'' quantization rule refers to the function $F$ which outputs any minimizer $q_k \in \A$ of $|u_k|$ as determined by \eqref{sigmadeltaupdate}. More precisely, it sets 
\begin{equation}
 q_k:=\mathrm{round}_\A \left (\sum_{j=1}^r (-1)^{j-1} \binom{r}{j} u_{k-j} + y_k\right)
\end{equation}
where $\mathrm{round}_\A(v)$ stands for any element in $\A$ that is closest to $v$.
If $\A$ is an infinite arithmetic progression of step size $\delta$, then greedy quantization is always stable (i.e. for any $r$ and for all input sequences) and produces a solution $u$ to \eqref{sigmadeltaeq} via \eqref{sigmadeltaupdate} which is bounded by $\delta/2$. 

Real difficulties start when $\A$ is a fixed, finite set, the extreme case being a set of two elements, which can be taken to be $\{\pm 1\}$ without loss of generality. In this case, it is a major challenge to design quantization rules that are stable for any order $r$ and for arbitrary bounded inputs in a range $[-\mu,\mu]$. The first breakthrough on this problem was made in the seminal paper of Daubechies and DeVore \cite{DD} where it was shown for $\A = \{\pm 1\}$ that for any order $r\geq 1$ and for any $\mu < 1$, there is a stable $r$th order quantization rule $F_{r,\mu}$ which guarantees that $u$ is bounded by some constant $C_{r,\mu}$ for all input sequences $y$ that are bounded by $\mu$. (For $r=1$, one can take $\mu = 1$ and find that $C_{1,1} = 1$ will do.) The constant $C_{r,\mu}$ depends on both $r$ and $\mu$, and blows up as $r \to \infty$, or as $\mu \to 1^-$ (except when $r=1$). Another family of stable quantization rules, but with more favorable $C_{r,\mu}$, was subsequently proposed in \cite{exp_decay}. In this paper we will be using the greedy quantization rule when $\A = \Z$, and either of the rules in \cite{DD} and \cite{exp_decay} when $\A = \{\pm 1\}$. We will not need the explicit descriptions of these rules, or of the associated bounds $C_{r,\mu}$. 

Clearly $q$ is a bounded sequence when $\A$ is a finite set, but even when $\A = \Z$, $q$ will be bounded when the input $y$ is bounded. This is because \eqref{sigmadeltaeq} implies $\|y - q \|_\infty \leq 2^r \|u\|_\infty \leq 2^{r-1}$ so that $\|q\|_\infty \leq \|y\|_\infty + 2^{r-1}$.

In the remaining sections, the letter $C$ will represent any constant whose value is not important for the discussion. Its value may be updated and it may also stand for different constants. Whenever $C$ depends on a given set of parameters, as in the previous paragraph, these parameters will be specified.

\subsection{Effect of noise shaping in the Bernstein basis}

It will be convenient for us to extend the index $k$ beyond $n$ in the definition of $p_{n,k}$. We do this without modifying the formula \eqref{Bernsteinbasis}, noting that this implies $p_{n,k} = 0$ for all $k > n$ since $\binom{n}{k} = 0$ in this range. 

As indicated in the previous subsection, we will always work with a 
stable $r$th order $\Sigma\Delta$ quantization scheme that is applied to convert an input sequence $y:=(y_k)_0^n$ bounded by $\mu$ to its quantized version $q:=(q_k)_0^n$. We will set $u_k = 0$ for all $k < 0$. With this assumption, we have the total quantization error
\begin{eqnarray}
 E^Q_n(y;x) & = & \sum_{k = 0}^n (\Delta^r u)_k \,p_{n,k}(x) \nonumber \\
 & = & \sum_{k=0}^n u_k \,(\tilde \Delta^r p_{n,\cdot}(x))_k \label{summedbyparts}
\end{eqnarray}
where $\tilde \Delta$ is the adjoint of $\Delta$ given by $(\tilde \Delta u)_k := u_k - u_{k+1}$. Notice that with our convention on $p_{n,k}$ for $k > n$, all the boundary terms are correctly included in \eqref{summedbyparts}. It now follows by our assumption of stability that $|u_k| \leq C_{r,\mu}$ for all $k$, so that 
\begin{equation}\label{genericbound}
 |E^Q_n(y;x) | \leq C_{r,\mu} V_{n,r}(x)
\end{equation}
where 
\begin{equation} \label{Bernsteinvariation}
V_{n,r}(x) := \sum_{k=0}^n \left |(\tilde \Delta^r p_{n,\cdot}(x))_k\right|
\end{equation}
stands for the {\em $r$th order variation} of the basis $\B_n$.

As we consider increasing values of $r$ in the sections below, we will be providing specific upper bounds for $V_{n,r}(x)$ of increasing complexity. We now note two important properties of the Bernstein basis that we will employ in our analysis. (See \cite[Ch.10]{DL} and \cite{lorentz1} for these and other useful facts about Bernstein polynomials.)
\begin{enumerate}
 \item The consecutive differences of the $p_{n,k}$ satisfy
\begin{equation}\label{Bernsteindiff}
 (\tilde \Delta p_{n,\cdot}(x))_k = p_{n,k}(x) - p_{n,k+1}(x) = \frac{(k+1)-(n+1)x}{(n+1)x(1-x)}
 p_{n+1,k+1}(x)
\end{equation}
which, with our convention on $p_{n,k}$ for $k>n$, holds for all $n,k \geq 0$, and all $x\in[0,1]$. When interpreting the right hand side of this equation for $x=0$ or $x=1$, it should be observed that the polynomial $ ((k{+}1){-}(n{+}1)x)p_{n+1,k+1}(x)$ is divisible by $x(1-x)$ for all $k \geq 0$. As is customary, we will use the short notation $X:=x(1-x)$.

\item We set
\begin{equation}
 T_{n,s}(x) := \sum_{k=0}^n (k-nx)^{s} p_{n,k}(x)
\end{equation}
for each non-negative integer $s$. Then we have $T_{n,0}(x) = 1$, $T_{n,1}(x) = 0$, $T_{n,2}(x) = nX$. In general, for each $s$, there is a constant $A_s$ such that 
\begin{equation} \label{Tbound}
0\leq  T_{n,2s}(x) \leq A_s n^s
\end{equation}
holds for all $n \geq 1$, uniformly over $x \in [0,1]$. 
\end{enumerate}

\section{Specific error bounds}\label{specificbounds}
\subsection{First order noise shaping}

For first order $\Sigma\Delta$ quantization, the greedy rule is the only rule one needs. 
There will be two choices for $\A$ of interest to us, $\{\pm 1\}$ and $\Z$. When $\A = \{\pm 1\}$, we have $|u_k| \leq 1$ for all input sequences $y$ bounded by $1$. When $\A = \Z$, we have $|u_k| \leq \frac{1}{2}$ for all input sequences $y$. In either case, \eqref{genericbound}, \eqref{Bernsteinvariation} and 
\eqref{Bernsteindiff} give us, for all admissible input sequences $y$, the bound
\begin{eqnarray}
|E^Q_n(y;x)| & \leq & V_{n,1}(x) \nonumber \\
& = &
\frac{1}{(n+1)X} \sum_{k=0}^n \big |(k{+}1)-(n{+}1)x \big |\,p_{n+1,k+1}(x) \nonumber \\
& \leq & \frac{1}{(n+1)X} \sqrt{T_{n+1,2}(x)} \nonumber \\
& = &  \frac{1}{\sqrt{(n+1)X}}, \label{quantbound0}
\end{eqnarray}
where we used Cauchy-Schwarz inequality in the third line.
We also have the trivial bound
\begin{equation} \label{trivialbound}
|E^Q_n(y;x)| \leq  \|y - q \|_\infty \leq 2\|u\|_\infty \lesssim 1.
\end{equation}
(Here, as usual, $A_n \lesssim B_n$ means $A_n \leq C B_n$ for all $n$ where $C$ is an absolute constant. When $C$ depends on some parameter $\alpha$, we use the notation $\lesssim_\alpha$.) Combining these two bounds, we obtain
\begin{equation}\label{quantbound1}
|E^Q_n(y;x)| \lesssim  \min\left ( \frac{1}{\sqrt{(n+1)X}}, 1 \right ) \lesssim \frac{1}{1+\sqrt{nX}}.
\end{equation}
Our first theorem follows directly from this bound:
\begin{theorem}\label{theorem1st}
For every continuous function $f:[0,1] \to [-1,1]$, and for every positive integer $n$, there exist signs $\sigma_0,\dots,\sigma_n \in \{\pm 1\}$ such that 
\begin{equation}
\left | f(x) - \sum_{k=0}^n \sigma_k \,p_{n,k}(x) \right | \lesssim
\omega_2(f,\sqrt{X/n}) + \frac{1}{1 + \sqrt{nX}},
\end{equation}
where $\omega_2$ stands for the $2$nd modulus of smoothness of $f$ and $X:=x(1-x)$. In particular, if $f$ is Lipschitz, then
\begin{equation} \label{boundLip1st}
\left | f(x) - \sum_{k=0}^n \sigma_k \,p_{n,k}(x) \right | \lesssim
\frac{1+|f|_\mathrm{Lip}}{1 + \sqrt{nX}}.
\end{equation}
\end{theorem}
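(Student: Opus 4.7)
The plan is to instantiate the general two-stage scheme set up in Section \ref{generalmethod} with the simplest choices: take the approximant to be the classical Bernstein polynomial of $f$, and quantize with the first-order greedy $\Sigma\Delta$ rule. Since $\|f\|_\infty \leq 1$, the coefficients $y_k = f(k/n)$ of $P:= B_n(f)$ in the Bernstein basis already satisfy $\|y\|_\infty \leq 1$, which is exactly the range for which the $r=1$ scheme with alphabet $\A = \{\pm 1\}$ is stable (with $C_{1,1}=1$).

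For the approximation stage, I would invoke the classical (Totik) pointwise bound for the Bernstein operator,
$$ |E^A_n(f;x)| = |f(x) - B_n(f)(x)| \lesssim \omega_2\!\left(f,\sqrt{X/n}\right), $$
valid for every $f \in C([0,1])$ and every $x \in [0,1]$; see \cite[Ch.10]{DL}. For the quantization stage, I would run the first-order greedy scheme to produce $q = (\sigma_k) \in \{\pm 1\}^{n+1}$ and cite the bound \eqref{quantbound1} derived immediately above, which gives $|E^Q_n(y;x)| \lesssim 1/(1+\sqrt{nX})$. The triangle inequality $|f(x) - \sum_k \sigma_k p_{n,k}(x)| \leq |E^A_n(f;x)| + |E^Q_n(y;x)|$ then delivers the first assertion of the theorem.

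For the Lipschitz refinement \eqref{boundLip1st}, I would apply the standard dominations $\omega_2(f,h) \leq 2\omega_1(f,h) \leq 2|f|_{\mathrm{Lip}}\, h$ with $h = \sqrt{X/n}$, reducing the approximation term to $|f|_{\mathrm{Lip}} \sqrt{X/n}$. Using $X \leq 1/4$, one checks $\sqrt{X/n}\,(1+\sqrt{nX}) = \sqrt{X/n} + X \lesssim 1$, so $\sqrt{X/n} \lesssim 1/(1+\sqrt{nX})$, and both error terms merge into a single $O((1+|f|_{\mathrm{Lip}})/(1+\sqrt{nX}))$ estimate.

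There is essentially no obstacle here: the theorem is a clean composition of a classical Bernstein approximation estimate with the already-established first-order $\Sigma\Delta$ bound \eqref{quantbound1}. The only step requiring care is the verification that the $\{\pm 1\}$ quantizer is admissible on the nose (i.e.\ the endpoint $\mu = 1$ is allowed at order $r=1$), which was explicitly noted in the discussion preceding this theorem.
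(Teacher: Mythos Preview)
Your proposal is correct and follows essentially the same argument as the paper: take $P = B_n(f)$ so that $y_k = f(k/n) \in [-1,1]$, feed $y$ into the first-order greedy $\Sigma\Delta$ quantizer with $\A = \{\pm 1\}$ (stable at $\mu=1$), invoke the pointwise Bernstein estimate $|f - B_n(f)| \lesssim \omega_2(f,\sqrt{X/n})$ together with \eqref{quantbound1}, and for the Lipschitz case bound $\omega_2(f,h) \lesssim |f|_{\mathrm{Lip}} h$ and absorb $\sqrt{X/n} \lesssim 1/(1+\sqrt{nX})$. The paper's proof is the same sequence of steps, only stated more tersely.
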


\begin{proof}
For any given $n$, we set $y_k = f(\frac{k}{n})$, $k=0,\dots,n$, so that $P(x) = B_n(f,x)$, and we use $\sigma_k = q_k$ as the output of the first order $\Sigma\Delta$ scheme with input $(y_k)$. Since $|y_k| \leq 1$ for all $k$, the scheme is stable and the result follows by noting (see \cite[p.308]{DL} that 
$$|E^A(f;x)| = |f(x)  - B_n(f,x)| \lesssim \omega_2(f,\sqrt{X/n})$$
together with \eqref{quantbound1}. The specific bound for the Lipschitz class follows trivially by noting that $\sqrt{X/n} \leq 1/(1+\sqrt{nX})$.
\end{proof}

{\bf Remarks:} 
\begin{itemize}
 
 \item As a function of $x$, the upper bound \eqref{boundLip1st} goes to $0$ at the rate $n^{-1/2}$ in $C([a,b])$, $0<a<b<1$, and in $L^p([0,1])$, $1\leq p<2$. The case $p\geq 2$ yields slower decay rates. However, a complete discussion of functions in $L^p([0,1])$ requires an unbounded alphabet, since the approximating polynomials (whether one uses suitable substitute Bernstein polynomials, or the Kantorovich polynomials of $f$ directly), can have arbitrarily large coefficients in the Bernstein basis. See Theorem \ref{theorem1st2} below for the full discussion of this case for $\A = \Z$, which includes a more precise description of the quantization error in all $p$-norms.
 
 \item The convergence rate $n^{-1/2}$ is the best one can expect from Bernstein polynomial approximation of Lipschitz functions (even without quantization), so it may come as a surprise that the rate holds up in the case of one-bit quantization. Information-theoretically, however, this rate is suboptimal, though perhaps not hugely so: The $\varepsilon$-entropy of the set of functions 
 $$\big \{f \in \mathrm{Lip}([a,b]) : \|f\|_\infty \leq 1, ~ |f|_\mathrm{Lip} \leq L \big\}$$  
 behaves as $L(b-a)\varepsilon^{-1}(1+ o(1))$ (see \cite{KT, lorentz2}) which implies that any $\varepsilon$-net for it of cardinality $O(2^n)$ must satisfy $\varepsilon \gtrsim L(b-a)n^{-1}$. 
 
 \item We used the Bernstein polynomial of $f$, i.e. $y_k = f(k/n)$, in our construction to achieve two goals at once: (i) approximation and (ii) guaranteed stability of the $\Sigma\Delta$ quantization scheme (due to $y_k$ being in the same range as $f$). However, if one can find better polynomial approximants whose coefficients in the Bernstein basis still fall in the range $[-1,1]$, then these could be used instead to reduce the approximation error $E^A(f,P;x)$. We will discuss this in Section \ref{beyond}. (We also note here a relevant result: If a polynomial $P(x)$ takes values in $(-1,1)$ for all $x \in [0,1]$, then \cite[Thm 1]{Rosenberg} shows that for all sufficiently large $m$, the coefficients of $P$ with respect to $\B_m$ will also be in the range $(-1,1)$. This opens up the possibility of employing better $P$, provided the value of $m$ can be controlled well, relative to the degree of $P$. However, it seems that in this generality the currently available quantitative information regarding this question is not useful enough, at least not immediately, to yield an improvement.)
 
 \item As we will see in the next two subsections, the quantization error $E^Q(y,q;x)$ can be reduced via higher order schemes.
 \end{itemize}

The difficulty about improving the approximation error $E^A(f,P;x)$ while maintaining quantizer stability disappears when $\A = \Z$ since the greedy $\Sigma\Delta$ scheme is then unconditionally stable (i.e. for all inputs), and this leads to a stronger approximation result as we show next.
For any $p \in [1,\infty)$, let $E_n(f)_p$ stand for the error of best approximation of $f \in L^p([0,1])$ by elements of $\P_n$. For $p=\infty$, we consider $f \in C([0,1])$ instead. We have

\begin{theorem}\label{theorem1st2}
For every real-valued function $f \in L^p([0,1])$, where $1\leq p < \infty$, and for every positive integer $n$, there exist $q_0,\dots,q_n \in \Z$ such that 
\begin{equation} \label{Lpbound1st}
\left \| f - \sum_{k=0}^n q_k \,p_{n,k} \right \|_p \lesssim
E_n(f)_p + C_p \left\{ \begin{array}{ll}
                    n^{-1/2}, & 1 \leq p < 2, \\
                    \sqrt{\frac{\log n}{n}}, & p = 2, \\
                    n^{-1/p}, & 2 < p < \infty.
                   \end{array}
 \right.
\end{equation}
If $f$ is continuous, then there exist $q_0,\dots,q_n \in \Z$ such that
\begin{equation}
\left | f(x) - \sum_{k=0}^n q_k \,p_{n,k}(x) \right | \lesssim
E_n(f)_\infty + \frac{1}{1 + \sqrt{nX}}.
\end{equation}
Consequently, for any sequence $(n_j)_1^\infty$ of integers such that $n_j \to \infty$, $\displaystyle \bigcup_{j=1}^\infty \L(\B_{n_j})$ is dense in $L^p([0,1])$ for all $p \in [1,\infty)$, and in
$C([a, b])$ for all $0<a<b<1$.
\end{theorem}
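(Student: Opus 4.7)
The plan is to apply the same two-stage strategy as in Theorem \ref{theorem1st}, but replacing the Bernstein polynomial of $f$ (used only to guarantee stability under a bounded alphabet) with a near-best polynomial approximant, exploiting the unconditional stability of greedy $\Sigma\Delta$ over the unbounded alphabet $\A = \Z$. Concretely, for each $n$ choose $P_n \in \P_n$ with $\|f - P_n\|_p \leq 2E_n(f)_p$ (respectively $\|f - P_n\|_\infty \leq 2 E_n(f)_\infty$ in the continuous case), expand $P_n = \sum_{k=0}^n y_k p_{n,k}$, and feed $(y_k)$ into the first-order greedy $\Sigma\Delta$ scheme. Because $\Z$ is an arithmetic progression of gap $1$, the construction reviewed in Section \ref{generalmethod} yields $\|u\|_\infty \leq 1/2$ regardless of the magnitudes of the $y_k$, so estimate \eqref{quantbound1} applies: the $q_k \in \Z$ produced by the scheme satisfy $|E_n^Q(y;x)| \lesssim (1+\sqrt{nX})^{-1}$ pointwise.

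The continuous case then follows immediately: by the triangle inequality,
\[
\Bigl|f(x) - \sum_{k=0}^n q_k p_{n,k}(x)\Bigr| \leq \|f - P_n\|_\infty + |E_n^Q(y;x)| \lesssim E_n(f)_\infty + \frac{1}{1+\sqrt{nX}}.
\]
For the $L^p$ bound I would again use the triangle inequality in $L^p([0,1])$ and reduce the task to estimating $\|(1+\sqrt{nX})^{-1}\|_p$. By symmetry about $x=1/2$ and the comparison $X \asymp x$ on $[0,1/2]$, this reduces to bounding $\int_0^{1/2}(1+\sqrt{nx})^{-p}\,dx$, and the substitution $u = \sqrt{nx}$ converts this to $\frac{2}{n}\int_0^{\sqrt{n/2}} u(1+u)^{-p}\,du$. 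The three stated regimes arise directly from the behavior of the latter integral as the upper limit grows: the integrand is integrable at infinity for $p > 2$ (giving $O(1/n)$, hence $\|\cdot\|_p \lesssim n^{-1/p}$), produces a logarithm for $p=2$ (giving $O(\log n/n)$, hence $\sqrt{\log n/n}$), and behaves like $u^{1-p}$ for $p < 2$ (giving $O(n^{-p/2})$, hence $n^{-1/2}$). Taking $p$-th roots delivers the $C_p$ term in \eqref{Lpbound1st}.

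The density statements are then corollaries of the quantitative bounds. For $L^p([0,1])$ with $p < \infty$, the density of polynomials gives $E_n(f)_p \to 0$, and the quantization term above vanishes as $n \to \infty$; applying the construction along any subsequence $n = n_j \to \infty$ produces elements of $\L(\B_{n_j})$ arbitrarily close to $f$. For $C([a,b])$ with $0 < a < b < 1$, extend $f$ to $\tilde f \in C([0,1])$ by (say) linear interpolation to the endpoints, invoke the pointwise bound, and observe that $X$ is bounded below by $\min(a(1-a), b(1-b)) > 0$ on $[a,b]$, so the quantization error there is $O(n^{-1/2})$ uniformly, while $E_n(\tilde f)_\infty \to 0$ by Weierstrass. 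The only nontrivial computation is the case-by-case $L^p$ estimate of $(1+\sqrt{nX})^{-1}$, which I expect to be the main (but very mild) obstacle; everything else is a direct assembly of machinery already in place in Sections \ref{generalmethod} and \ref{specificbounds}.
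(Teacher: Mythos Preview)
Your proposal is correct and follows essentially the same approach as the paper: best (or near-best) polynomial approximation in $\P_n$, first-order greedy $\Sigma\Delta$ over $\A=\Z$ (unconditional stability giving $\|u\|_\infty\le 1/2$), and then an $L^p$ estimate of $(1+\sqrt{nX})^{-1}$ via symmetry about $1/2$. The only cosmetic difference is that the paper splits the integral at $1/n$ and bounds $(1+\sqrt{nX})^{-p}$ by $(nx/2)^{-p/2}$ directly, whereas you use the substitution $u=\sqrt{nx}$; both yield the same three-case outcome, and your more explicit treatment of the density corollaries (including the extension argument for $C([a,b])$) is a harmless elaboration of what the paper leaves implicit.
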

\begin{proof}
Given $f$ in $L^p([0,1])$ or $C([0,1])$, let $P\in\P_n$ be the best approximant of $f$ in the corresponding $p$-norm, with coefficients $y:=(y_k)_0^n$ in the Bernstein basis. Hence we have $\|E^A_n(f)\|_p = E_n(f)_p$. 

Next we bound $\|E^Q_n(y)\|_p$ using \eqref{quantbound1}. Because of the symmetry of the bound with respect to $1/2$, and employing the inequality $x(1-x) \geq x/2$ for $x\in [0,1/2]$, we have
\begin{eqnarray}
\int_0^1 \frac{1}{(1+\sqrt{nX})^p} \,\mathrm{d}x 
& \leq & 2 \int_0^{1/n} \mathrm{d}x +
2 \int_{1/n}^{1/2} \frac{1}{(\sqrt{nx/2})^p}\,\mathrm{d}x \nonumber \\
& \lesssim_p &
 \left\{ \begin{array}{ll}
                    n^{-p/2}, & 1 \leq p < 2, \\
                    \frac{\log n}{n}, & p = 2, \\
                    n^{-1}, & 2 < p < \infty,
                   \end{array}
 \right. 
\end{eqnarray}
hence \eqref{Lpbound1st} follows.
When $p = \infty$, we have to settle with the pointwise bound only. 
\end{proof}

\par \noindent {\bf Remark:}
With additional information on $f$, the approximation error term $E_n(f)_p$ in Theorem \ref{theorem1st2} can be quantified further; for example, via $E_n(f)_p \lesssim_s \omega_s(f,1/n)_p$, which holds for any $n,s$ such that $n \geq s$.

\subsection{Second order noise shaping}
For the second order, we will use the greedy quantization rule for $\A = \Z$, and any of the stable schemes proposed in \cite{DD} or \cite{exp_decay}. The analysis of the quantization error will be the same in both cases, the only difference concerning the range of admissible coefficients $y$. For $\A = \Z$, we will be able to work with all sequences again, whereas for $\A = \{ \pm 1\}$ we will have to restrict to the $y_k$ that range in $[-\mu,\mu]$ for some $\mu < 1$. The trivial bound \eqref{trivialbound} is replaced with
\begin{equation} \label{trivialbound2}
 |E^Q(y;x)| \leq \|y - q \|_\infty = \|\Delta^2 u \|_\infty \leq 4 \|u\|_\infty \leq
 \left\{ \begin{array}{ll}
                    4, & \mbox{ if } \A = \Z, \\
                    4 C_{2,\mu}, & \mbox{ if } \A = \{\pm 1\}, 
                   \end{array}
 \right. 
\end{equation}
where the value of $C_{2,\mu}$ depends on which stable second order scheme is employed.

Meanwhile, utilization of the general purpose bound \eqref{genericbound} via \eqref{Bernsteinvariation} requires another application of $\tilde \Delta$ on \eqref{Bernsteindiff}. To lighten the notation, we will adopt the following convention when there is no possibility for confusion: $p_{n,k}$ will be short for $p_{n,k}(x)$ (since all basis functions will be evaluated at the same point), and $\tilde \Delta^r p_{n,k}$ will be short for $(\tilde \Delta^r p_{n,\cdot}(x))_k$. (Note that there is no ambiguity in the meaning of $\tilde \Delta$ acting on a shifted sequence such as $p_{n,k+1}$ since shifts commute with $\tilde \Delta$.) With this convention, we have
\begin{eqnarray}
 \tilde \Delta^2 p_{n,k} & = & \tilde \Delta p_{n,k} - \tilde \Delta p_{n,k+1}
 \nonumber \\
 & = & \frac{1}{(n{+}1)X} \Big( \big (k{+}1 - (n{+}1)x\big ) p_{n+1,k+1} - \big (k{+}2 - (n{+}1)x\big )p_{n+1,k+2} \Big) \nonumber \\
 & = & \frac{1}{(n{+}1)X} \Big( \big (k{+}2 - (n{+}1)x\big ) \tilde \Delta p_{n+1,k+1} - p_{n+1,k+1} \Big)
 \nonumber \\
 & = & \frac{1}{(n{+}1)X} \Big( \big (k{+}2 - (n{+}2)x\big ) \tilde \Delta p_{n+1,k+1} + x \tilde \Delta p_{n+1,k+1} - p_{n+1,k+1} \Big)
 \nonumber \\
  & = & \frac{1}{(n{+}1)(n{+}2)X^2} \big (k{+}2 - (n{+}2)x\big )^2 p_{n+2,k+2}
  -\frac{1}{(n{+}1)X} \Big( x p_{n+1,k+2} +(1-x) p_{n+1,k+1} \Big) \nonumber
\end{eqnarray}
so that 
\begin{equation}
V_{n,2}(x)=\sum_{k=0}^n  \left |  (\tilde \Delta^2 p_{n,\cdot}(x))_k \right | \leq
\frac{T_{n+2,2}(x)}{(n{+}1)(n{+}2)X^2} + \frac{1}{(n+1)X} = \frac{2}{(n+1)X}
\end{equation}
and therefore 
\begin{equation}\label{quantbound2}
|E^Q_n(y;x)| \lesssim_\mu  \min\left ( \frac{1}{(n+1)X}, 1 \right ) \lesssim \frac{1}{1+nX}.
\end{equation}

Equipped with this bound, we can now improve the rate of convergence for smoother functions. The proof of the next theorem mirrors the proof of Theorem \ref{theorem1st} verbatim.
\begin{theorem}\label{theorem2nd}
Let $\mu < 1$ be arbitrary. For every continuous function $f:[0,1] \to [-\mu,\mu]$, and for every positive integer $n$, there exist signs $\sigma_0,\dots,\sigma_n \in \{\pm 1\}$ such that 
\begin{equation}
\left | f(x) - \sum_{k=0}^n \sigma_k \,p_{n,k}(x) \right | \lesssim_\mu
\omega_2(f,\sqrt{X/n}) + \frac{1}{1 + nX}
\end{equation}
where $\omega_2$ stands for the $2$nd modulus of smoothness of $f$ and $X:=x(1-x)$. In particular, if $f$ has a Lipschitz derivative, then
\begin{equation}
\left | f(x) - \sum_{k=0}^n \sigma_k \,p_{n,k}(x) \right | \lesssim_\mu
\frac{1+|f'|_\mathrm{Lip}}{1 + nX}.
\end{equation}
\end{theorem}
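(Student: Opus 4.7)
The plan is to follow the template of the proof of Theorem \ref{theorem1st} verbatim, the only substantive change being the use of a stable second-order $\Sigma\Delta$ scheme in place of the first-order greedy rule, which forces the hypothesis $\|f\|_\infty \le \mu < 1$ rather than $\le 1$. Set $y_k := f(k/n)$ for $k = 0,\dots,n$, so that the corresponding polynomial is $P = B_n(f)$ with $P(x) = \sum_k y_k\, p_{n,k}(x)$. Since $|y_k| \le \mu < 1$ for all $k$, the input sequence lies in the range of admissible inputs for either of the stable second-order $\Sigma\Delta$ schemes of \cite{DD} or \cite{exp_decay} with alphabet $\A = \{\pm 1\}$. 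Let $q_k =: \sigma_k \in \{\pm 1\}$ be the output of this scheme, and let $u$ be the associated state sequence, which satisfies $\|u\|_\infty \le C_{2,\mu}$.

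Split the total error via the triangle inequality as
$$\left| f(x) - \sum_{k=0}^n \sigma_k\, p_{n,k}(x) \right| \le |E^A_n(f;x)| + |E^Q_n(y;x)|.$$
For the approximation error, invoke the standard pointwise bound for the Bernstein polynomial (see \cite[p.308]{DL}),
$$|E^A_n(f;x)| = |f(x) - B_n(f,x)| \lesssim \omega_2(f,\sqrt{X/n}).$$
For the quantization error, use the bound \eqref{quantbound2} already established in this subsection, namely $|E^Q_n(y;x)| \lesssim_\mu 1/(1+nX)$. Combining the two gives the first inequality of the theorem.

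For the second statement, assume $f' \in \mathrm{Lip}([0,1])$. The second modulus of smoothness then satisfies the standard estimate $\omega_2(f,\delta) \lesssim |f'|_{\mathrm{Lip}}\, \delta^2$, so with $\delta = \sqrt{X/n}$ we get $\omega_2(f,\sqrt{X/n}) \lesssim |f'|_{\mathrm{Lip}}\, X/n$. It remains only to check that $X/n \lesssim 1/(1+nX)$, which is immediate from $X(1+nX) = X + nX^2 \le \tfrac14 + \tfrac{n}{16} \lesssim n$ for $n \ge 1$, given $X \le 1/4$. Absorbing the constant term coming from the approximation-error side (which is bounded by $\omega_2(f,\sqrt{X/n}) \lesssim \|f\|_\infty$, and hence by $1 \lesssim 1/(1+nX) \cdot (1+nX)$, but more simply by noting $1 \le 1 + |f'|_{\mathrm{Lip}}$ and combining with the quantization bound) yields the stated Lipschitz-derivative bound.

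I do not anticipate any real obstacle: the quantization-side bound \eqref{quantbound2} has already been derived via the identity for $\tilde\Delta^2 p_{n,k}$ and the moment bound \eqref{Tbound}, the approximation-side bound is a textbook Bernstein estimate, and the stability of the second-order one-bit scheme on inputs bounded by $\mu < 1$ is supplied by \cite{DD,exp_decay}. The only delicate point is the strict inequality $\mu < 1$, which reflects the well-known fact that no stable second-order one-bit $\Sigma\Delta$ rule exists at $\mu = 1$; this is precisely why the hypothesis is $f:[0,1]\to[-\mu,\mu]$ rather than $[-1,1]$.
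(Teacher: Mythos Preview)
Your proposal is correct and matches the paper's approach exactly: the paper states that the proof ``mirrors the proof of Theorem~\ref{theorem1st} verbatim,'' and you carry this out in full, using $y_k=f(k/n)$, the stable second-order one-bit scheme on inputs bounded by $\mu<1$, the Bernstein approximation bound $|f-B_n(f)|\lesssim\omega_2(f,\sqrt{X/n})$, and the quantization bound \eqref{quantbound2}. Your verification that $X/n\lesssim 1/(1+nX)$ is the second-order analogue of the paper's observation $\sqrt{X/n}\le 1/(1+\sqrt{nX})$ in the first-order case.
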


Similarly, we are now also able to improve the bound achieved in Theorem \ref{theorem1st2}. Again, the proof of next theorem mirrors that of Theorem \ref{theorem1st2}. The only difference is now we are bounding the $p$-norm of the function $1/(1+nX)$.
\begin{theorem}\label{theorem2nd2}
For every real-valued function $f \in L^p([0,1])$, where $1\leq p < \infty$, and for every positive integer $n$, there exist $q_0,\dots,q_n \in \Z$ such that 
\begin{equation} \label{Lpbound2nd}
\left \| f - \sum_{k=0}^n q_k \,p_{n,k} \right \|_p \lesssim
E_n(f)_p + C_p \left\{ \begin{array}{ll}
                    \frac{\log n}{n}, & p = 1, \\
                    n^{-1/p}, & 1 < p < \infty. 
                   \end{array}
 \right.
\end{equation}
If $f$ is continuous, then
\begin{equation}
\left | f(x) - \sum_{k=0}^n q_k \,p_{n,k}(x) \right | \lesssim
E_n(f)_\infty + \frac{1}{1 + nX}.
\end{equation}
\end{theorem}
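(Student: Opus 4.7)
The argument would follow exactly the same template as Theorem \ref{theorem1st2}, only substituting the second-order quantization bound \eqref{quantbound2} in place of the first-order bound \eqref{quantbound1}. Specifically, given $f$ in $L^p([0,1])$ (or $C([0,1])$), I would let $P \in \P_n$ be the best approximant to $f$ in the relevant norm, and write its representation in $\B_n$ as $P = \sum_{k=0}^n y_k \, p_{n,k}$, so that $\|E^A_n(f,P)\|_p = E_n(f)_p$. I would then apply the second-order greedy $\Sigma\Delta$ scheme with alphabet $\A = \Z$ to the coefficient sequence $y = (y_k)_0^n$ to produce $q = (q_k)_0^n$. The key point (already highlighted in the paper) is that for $\A=\Z$, an infinite arithmetic progression, the greedy rule is unconditionally stable for every order, so no restriction on the size of $y_k$ is needed, and $\|u\|_\infty \leq 1/2$. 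Hence \eqref{quantbound2} applies and gives the pointwise bound $|E^Q_n(y;x)| \lesssim 1/(1+nX)$.

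The triangle inequality then yields
\begin{equation*}
\Big\| f - \sum_{k=0}^n q_k\, p_{n,k} \Big\|_p \leq \|E^A_n(f)\|_p + \|E^Q_n(y)\|_p \lesssim E_n(f)_p + \Big\| \tfrac{1}{1+nX} \Big\|_p,
\end{equation*}
so everything reduces to computing the $L^p$ norm of $1/(1+nX)$ on $[0,1]$.

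By symmetry about $x=1/2$ and the elementary inequality $X = x(1-x) \geq x/2$ on $[0,1/2]$, we have
\begin{equation*}
\int_0^1 \frac{dx}{(1+nX)^p} \,\leq\, 2 \int_0^{1/2} \frac{dx}{(1+nx/2)^p}.
\end{equation*}
A change of variables $u = 1+nx/2$ turns the last integral into $(2/n)\int_1^{1+n/4} u^{-p}\,du$, which evaluates to $O(\log n/n)$ when $p=1$ and to $O(1/n)$ when $p>1$. Taking $p$th roots gives the $C_p$-terms on the right-hand side of \eqref{Lpbound2nd}. For continuous $f$, the pointwise bound is immediate from the pointwise versions of $E^A_n$ (via $\|\cdot\|_\infty$) and \eqref{quantbound2}. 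The density corollary for $\L(\B_n)$ in $L^p$ and in $C([a,b])$ is not restated, but would follow from the fact that $E_n(f)_p \to 0$ together with the quantization-error bound tending to zero.

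\textbf{Anticipated difficulties.} There is no real obstacle here: the second-order stability for $\A=\Z$ comes for free from the greedy rule, and the only nontrivial ingredient is the elementary $L^p$ estimate of $1/(1+nX)$, which is a one-line computation. The mild asymmetry between $p=1$ (logarithmic factor from integrating $1/(1+nx)$) and $p>1$ (no logarithm) is the single subtle point, and mirrors the analogous split at $p=2$ in Theorem \ref{theorem1st2}.
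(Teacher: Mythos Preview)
Your proposal is correct and matches the paper's own argument essentially verbatim: the paper explicitly states that the proof mirrors that of Theorem~\ref{theorem1st2}, with the only change being the $L^p$ estimate of $1/(1+nX)$ in place of $1/(1+\sqrt{nX})$. Your change-of-variables computation of $\int_0^1 (1+nX)^{-p}\,dx$ is a clean equivalent of the splitting-the-integral approach used for Theorem~\ref{theorem1st2} and yields the same case distinction at $p=1$.
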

\subsection{Going beyond second order} \label{beyond}

Inspecting the outcome of the first and second order noise shaping, it is natural to expect faster decay of the quantization error when higher order noise-shaping is employed. This expectation can be met, as we will discuss later in this section. But first, we need to address the approximation error. As long as $B_n(f)$ is employed in the first approximation stage, the error bound of Theorem \ref{theorem2nd} is the best one can achieve (i.e. regardless of the amount of smoothness of $f$) due to the saturation of the approximation error at the rate $1/n$; any gain in the quantization error from using higher order noise shaping would be drowned by this term. Possibility of improvement by means of better polynomial approximations whose coefficients in $\B_n$ also admit stable higher order quantization remains. Indeed, as we did before, this approach works ``out of the box'' in the case $\A = \Z$ since we have unconditional stability. However, for $\A=\{\pm 1\}$, this is not immediate: if we are to work with our stability criterion $\|y\|_\infty \leq \mu < 1$, then we will need to ensure that the coefficients of these improved polynomial approximations also satisfy this criterion. 

Several adjustments of classical Bernstein polynomials that adapt to the smoothness of the function that is being approximated have been proposed in the literature. Most of these are not suitable for our quantization method, at least not immediately, as they do not necessarily produce polynomial approximations of the same degree, but there is one that works: the particular linear combinations of iterated Bernstein operators, as developed by Micchelli \cite{micchelli1973} and Felbecker \cite{felbecker1979}. As before, let $B_n$ be the Bernstein operator on $C([0,1])$ that maps $f$ to $B_n(f)$, and for any integer $r \geq 1$, let
\begin{equation}
U_{n,r} := I - (I-B_n)^r. 
\end{equation}
It follows by combining the results of \cite{micchelli1973} and \cite{felbecker1979} (see also \cite[Ch.9.2]{bustamante}) that for any $s \geq 1$, if $f \in C^{s-1}([0,1])$ and $f^{(s-1)}$ is Lipschitz, then 
\begin{equation}\label{U-bound}
\|f - U_{n,\lceil s/2 \rceil}(f) \|_\infty \lesssim_s \|f\|_{_{C^{s-1}\mathrm{Lip}}} n^{-s/2}
\end{equation}
where $\lceil \cdot \rceil$ denotes the ceiling operator and 
\begin{equation}\label{CsLipnorm}
\|f\|_{_{C^{s-1}\mathrm{Lip}}}:=\max(\|f\|_{C^{s-1}}, |f^{(s-1)}|_\mathrm{Lip}).
\end{equation}
Note that $U_{n,1} = B_n$, so for $s=1$ and $s=2$, the result \eqref{U-bound} reduces to the regular Bernstein polynomial approximation error bound that we have already utilized.

More generally, for any $r \geq 1$, the coefficients of $U_{n,r}(f)$ with respect to $\B_n$ are well controlled. We have the following:
\begin{theorem}\label{U-stability}
For any $f \in C([0,1])$ and integers $n,r \geq 1$, there exists $f_{n,r}\in C([0,1])$ satisfying
$$U_{n,r}(f) = B_n(f_{n,r}) = \sum_{k=0}^n f_{n,r}\Big( \frac{k}{n} \Big) p_{n,k} $$
and
$$\|f_{n,r} - f \|_\infty \leq (2^{r-1}{-}1) \|f - B_n(f)\|_\infty.$$
Therefore, $f_{n,r} \to f$ uniformly as $n \to \infty$, and in particular, 
$\|f\|_\infty < 1$ implies $\|f_{n,r}\|_\infty < 1$ for all sufficiently large $n$. Furthermore, there is an absolute constant $c_0$ such that if $f\in C^2([0,1])$ and $\|f\|_\infty \leq 1-2\epsilon < 1$, then for all $n \geq \epsilon^{-1} c_0 (2^{r-1}{-}1) \|f^{(2)}\|_\infty $, we have $ \|f_{n,r}\|_\infty \leq 1 - \epsilon$.
\end{theorem}
\begin{proof}
The case $r=1$ is trivial, since then $U_{n,1} = B_n$ and we can set $f_{n,1}=f$. We consider $r \geq 2$.
Note the polynomial identity $1-x^r = (1-x)(1+x+\dots+x^{r-1})$ which implies the relation
$$ U_{n,r} = I-(I-B_n)^r = B_n\Big(I + \sum_{j=0}^{r-2} (I-B_n)^{j+1} \Big).$$
Hence, defining $$f_{n,r} := f + \sum_{j=0}^{r-2} (I-B_n)^{j+1}(f)$$ it follows at once that
$$ U_{n,r}(f) = B_n(f_{n,r})$$
and that
$$ \|f_{n,r}-f\|_\infty \leq 
\sum_{j=0}^{r-2} \|(I - B_n)^j\|_{\infty \to \infty} \|f - B_n(f)\|_\infty. $$
At this point we only need to utilize the trivial bound $\|(I-B_n)^j\|_{\infty \to \infty} \leq 2^j$ which follows from $\|B_n\|_{\infty \to \infty} = 1$. Hence we get
$$ \|f_{n,r}-f\|_\infty \leq 
(2^{r-1}{-}1)\|f - B_n(f)\|_\infty.$$
The next statement is now immediate. In particular, if $f \in C^2([0,1])$, we can use the bound $\|f - B_n(f)\|_\infty 
\leq c_0 \|f^{(2)}\|_\infty n^{-1} $ to reach the final conclusion.
\end{proof}

With the above theorem, we are able to enjoy approximation error $E^A(f,P)=O(n^{-s/2})$ for $f \in C^{s-1}\mathrm{Lip}([0,1])$ while maintaining stability of the quantization scheme (for all sufficiently large $n$) for the case $\A = \{\pm 1\}$ with the mere condition $\|f \|_\infty < 1$. 
We can now return to reduction of quantization error by means of higher order noise shaping. Let us first check the ansatz 
$$\beta_{n,r}(x) := C_{r,\mu} \min\left(\frac{1}{((n+1)X)^{r/2}},1\right)$$ 
as a bound for the $r$th order quantization error. It is apparent that as a function of $n$,
$$ \|\beta_{n,r}\|^p_p \asymp \| \beta_{n,2}\|^{rp/2}_{rp/2} \asymp \frac{1}{n} $$
for all $r > 2$ and all $1 \leq p < \infty$. Therefore, this type of bound would offer no gain over the second order case in terms of the rate of convergence in $p$-norms, except for the removal of the $\log n$ factor for $p=1$. 
However, the situation is not as disappointing for the pointwise bounds (or the uniform bounds on subintervals). Our goal is now to provide a path towards validating this ansatz.

We will again use our notation convention and write $\tilde \Delta^r p_{n,k}$ for $(\tilde \Delta^r p_{n,\cdot}(x))_k$.
For $r \geq 1$, we have 
\begin{eqnarray}
 \tilde \Delta^{r+1} p_{n,k}
 & = & \frac{1}{(n{+}1)X}\,\tilde \Delta^r  \Big( (k{+}1 - (n{+}1)x)\,p_{n+1,k+1} \Big) \nonumber \\
 & = & \frac{1}{(n{+}1)X}\,\Big( (k{+}1 - (n{+}1)x)\, \tilde \Delta^r p_{n+1,k+1} - r \tilde \Delta^{r-1} p_{n+1,k+2} \Big) \label{3termrec}
\end{eqnarray}
where in the second equality we have used the Leibniz formula
$$ \tilde \Delta^r (a_k b_k) = \sum_{j=0}^r \binom{r}{j} (\tilde \Delta^j a_k)( \tilde \Delta^{r-j} b_{k+j})$$
with $a_k =  (k{+}1 - (n{+}1)x)$ and $b_k = p_{n+1,k+1}$. Notice that $\tilde \Delta^j a_k = 0$ for $j \geq 2$.

The recurrence relation \eqref{3termrec} paves the way for induction for bounding $V_{n,r}(x)$. However, due to the presence of $(k{+}1 - (n{+}1)x)$, we will work with a stronger induction hypothesis concerning all of the sums
$$ Y_{n,r,s}(x) := \sum_{k\geq0} |k-nx|^s \left | \tilde \Delta^r p_{n,k}(x) \right |  $$
at once. Note that this sum runs over all $k \geq 0$ but there is no contribution from the terms $k > n$.

\begin{theorem}
For all non-negative integers $r$ and $s$, we have
\begin{equation}\label{variationmoments}
Y_{n,r,s}(x) \lesssim_{r,s} n^{\frac{s-r}{2}} X^{-r}.
\end{equation}
In particular, 
\begin{equation}\label{Vnrbound}
V_{n,r}(x) = Y_{n,r,0}(x) \lesssim_r n^{-\frac{r}{2}} X^{-r}.
\end{equation}
\end{theorem}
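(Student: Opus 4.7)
The natural strategy is induction on $r$, with the hypothesis treated simultaneously for all $s\geq 0$. The strengthened hypothesis is forced by the shape of the recurrence \eqref{3termrec}: the factor $k{+}1{-}(n{+}1)x$ on its right-hand side means that controlling a moment of order $s$ at level $(n, r{+}1)$ requires control of a moment of order $s{+}1$ at level $(n{+}1, r)$, so a bound on $V_{n,r}=Y_{n,r,0}$ alone cannot be propagated inductively. The base case $r=0$ is immediate: by Cauchy--Schwarz and \eqref{Tbound},
$$Y_{n,0,s}(x) = \sum_{k=0}^n |k-nx|^s\,p_{n,k}(x) \leq \sqrt{T_{n,2s}(x)} \leq \sqrt{A_s}\,n^{s/2},$$
which matches \eqref{variationmoments} with $X^{-0}=1$.

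For the inductive step I take absolute values in \eqref{3termrec} (read with the convention that the second term vanishes when $r=0$), multiply by $|k-nx|^s$, and sum over $k\geq 0$ to get
$$Y_{n,r+1,s}(x) \leq \frac{1}{(n{+}1)X}\bigl(S_1(x) + r\,S_2(x)\bigr),$$
where $S_1 = \sum_{k\geq 0}|k-nx|^s\,|k{+}1{-}(n{+}1)x|\,|\tilde\Delta^r p_{n+1,k+1}(x)|$ and $S_2 = \sum_{k\geq 0}|k-nx|^s\,|\tilde\Delta^{r-1} p_{n+1,k+2}(x)|$. Shifting $k\mapsto k{-}1$ in $S_1$ indexes $p_{n+1,\cdot}$ by its natural variable $k'$, and the factor $|k-nx|^s$ becomes $|k'{-}(n{+}1)x{-}(1{-}x)|^s$; the elementary bound $(|j|+c)^s \lesssim_s |j|^s+1$ for $c\in[0,2]$ then yields $S_1 \lesssim_s Y_{n+1,r,s+1}(x) + Y_{n+1,r,1}(x)$. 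The analogous shift $k\mapsto k{-}2$ in $S_2$ similarly gives $S_2 \lesssim_s Y_{n+1,r-1,s}(x) + Y_{n+1,r-1,0}(x)$. Invoking the inductive hypothesis on each of the four resulting sums and noting that the larger-$s$ term dominates for $s\geq 0$, dividing through by $(n{+}1)X$ produces
$$\frac{S_1}{(n{+}1)X} \lesssim_{r,s} n^{(s-r-1)/2}X^{-(r+1)}, \qquad \frac{r\,S_2}{(n{+}1)X} \lesssim_{r,s} n^{(s-r-1)/2}X^{-r}.$$
Since $X\leq 1/4$, the second bound is absorbed into the first, closing the induction. The statement \eqref{Vnrbound} for $V_{n,r}$ follows by specializing to $s=0$.

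The only genuine difficulty is index bookkeeping: after each shift the centering point $nx$ in the moment factor must be re-expressed in terms of the level-$(n{+}1)$ centering $(n{+}1)x$, and the resulting $O(1)$ discrepancies $(1{-}x)$ and $(2{-}x)$ have to be cleanly absorbed into the implicit $C_{r,s}$ constant via the binomial-type inequality above. Extending the post-shift sums from $k'\geq 1$ or $k'\geq 2$ back to $k'\geq 0$ only enlarges them, so no additional care is needed at the boundary. Once this accounting is laid out, the induction is self-contained.
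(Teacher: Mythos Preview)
Your proposal is correct and follows essentially the same route as the paper: induction on $r$ with the hypothesis asserted for all moments $s$ simultaneously, the recurrence \eqref{3termrec} combined with the shift $k-nx=(k{+}1)-(n{+}1)x-(1{-}x)$ to reduce to $Y_{n+1,r,s+1}+Y_{n+1,r,1}$ and $Y_{n+1,r-1,s}+Y_{n+1,r-1,0}$, and then the inductive bounds. The one cosmetic difference is that the paper verifies $r=1$ as a separate base case, whereas you observe that the coefficient $r$ on the $S_2$ term makes it vanish when stepping from $r=0$ to $r=1$, so that $r=0$ alone suffices to launch the two-term recursion; this is a mild but legitimate streamlining.
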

\par \noindent {\bf Remark.} Note that in its $X$ dependence, this is a weaker bound than our ansatz. However, the $n$ dependence is the same. The exponent of $X$ in the proof below can be improved, but we do not know if $-r/2$ is achievable for $r > 2$. However, given that we are primarily interested in the rate of convergence relative to $n$, this question is of secondary importance and we leave it for future work.
\begin{proof}
For any $r$, let $\mathbf{P}(r)$ be the statement that \eqref{variationmoments} holds for all $s$ (as well as $n$ and $x$). The case $r=0$ is readily covered by \eqref{Tbound} and Cauchy-Schwarz:
$$ Y_{n,0,s}(x) \leq \sqrt{T_{n,2s}(x)} \lesssim_s n^{\frac{s}{2}}.$$
The case $r=1$ is handled by noticing that 
\begin{eqnarray}
Y_{n,1,s}(x) &=&  \frac{1}{(n{+}1)X} \sum_{k=0}^n |k-nx|^s \,|k{+}1-(n{+}1)x| \,p_{n+1,k+1}(x) \nonumber \\
& \leq &\frac{1}{(n{+}1)X} \sum_{k=0}^n 2^s \big(|k{+}1-(n{+}1)x|^s + |1-x|^s\big) \,|k{+}1-(n{+}1)x| \,p_{n+1,k+1}(x) \nonumber \\
& \lesssim_s &\frac{1}{(n{+}1)X} \left ( Y_{n+1,0,s+1}(x) + Y_{n+1,0,s}(x) \right) \nonumber \\
& \lesssim_s & n^\frac{s-1}{2} X^{-1}.
\end{eqnarray}
Assume that $\mathbf{P}(r)$ and $\mathbf{P}(r{-}1)$ are true. We will verify $\mathbf{P}(r{+}1)$. 
The identity \eqref{3termrec} implies that 
\begin{eqnarray}
|k-nx|^s \left|\tilde \Delta^{r+1} p_{n,k}\right |
 & \lesssim_s & (|k{+}1-(n{+}1)x|^s+1) \frac{1}{(n{+}1)X}\,|k{+}1 - (n{+}1)x|\, \left|\tilde \Delta^r p_{n+1,k+1}\right|
 \nonumber \\
 & & +(|k{+}2-(n{+}1)x|^s+2^s) \frac{r}{(n{+}1)X}\,\left|\tilde \Delta^{r-1} p_{n+1,k+2} \right|  \nonumber  
\end{eqnarray}
so that summing over all $k \geq 0$, and applying our induction hypothesis, we get
\begin{eqnarray}
Y_{n,r+1,s}(x) & \lesssim_{r,s} & \frac{1}{(n{+}1)X} \left ( Y_{n+1,r,s+1}(x) + Y_{n+1,r,1}(x) 
+ Y_{n+1,r-1,s}(x) + Y_{n+1,r-1,0}(x) \right) \nonumber \\
& \lesssim_{r,s} & \frac{1}{(n{+}1)X} \left ( n^\frac{s+1-r}{2}X^{-r} + 
n^\frac{1-r}{2}X^{-r} + n^\frac{s-r+1}{2}X^{-r+1} + n^\frac{-r+1}{2}X^{-r+1} \right)  \nonumber \\
& \lesssim_{r,s} & n^\frac{s-(r+1)}{2}X^{-(r+1)},
\end{eqnarray}
hence $\mathbf{P}(r{+}1)$ is true and the induction step is complete.
\end{proof}
With the stability guarantee of Theorem \ref{U-stability} and the bounds \eqref{U-bound} and \eqref{Vnrbound} in place for the approximation and the quantization errors, all is left now is to state our final two theorems. The first theorem below follows by employing $P = U_{n,\lceil s/2 \rceil}(f)$, and the second one by employing the best approximation to $f$ from $\P_n$.
\begin{theorem}\label{theorem3rd}
Let  $s \geq 3$ be any integer. For every $f \in C^{s-1}([0,1])$ with $f^{(s-1)} \in \mathrm{Lip}([0,1])$ and $\|f\|_\infty \leq \mu < 1$,  there exist signs $\sigma_0,\dots,\sigma_n \in \{\pm 1\}$ such that 
\begin{equation}
\left | f(x) - \sum_{k=0}^n \sigma_k \,p_{n,k}(x) \right | \lesssim_{\mu,s}
\|f\|_{C^{s-1}\mathrm{Lip}}n^{-s/2} + \min(1,X^{-s} n^{-s/2})
\end{equation}
holds for all $n \gtrsim\|f^{(2)}\|_\infty / (1-\mu)$ where $\|f\|_{C^{s-1}\mathrm{Lip}}$ is defined in \eqref{CsLipnorm} and $X:=x(1-x)$.
\end{theorem}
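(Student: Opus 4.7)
The plan is to combine three ingredients already established earlier in the section: the modified Bernstein approximant $U_{n,\lceil s/2 \rceil}(f)$ from Theorem~\ref{U-stability}, the approximation rate \eqref{U-bound}, and the $s$-th order quantization variation bound \eqref{Vnrbound}. Specifically, I would take $P := U_{n,\lceil s/2 \rceil}(f)$ as the stage-one approximant and run an $s$-th order stable $\Sigma\Delta$ quantization scheme (the order of noise shaping, $r=s$, is chosen independently of the $\lceil s/2\rceil$ inside $U$) with alphabet $\A = \{\pm 1\}$.

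First, Theorem~\ref{U-stability} supplies a continuous $f_{n,\lceil s/2 \rceil}$ with $P = B_n(f_{n,\lceil s/2 \rceil})$, so the coefficients of $P$ in the Bernstein basis $\B_n$ are exactly $y_k := f_{n,\lceil s/2\rceil}(k/n)$. Setting $\epsilon := (1-\mu)/2$ so that $\|f\|_\infty \leq 1-2\epsilon$, the quantitative part of Theorem~\ref{U-stability} yields $\|f_{n,\lceil s/2\rceil}\|_\infty \leq 1 - \epsilon$ as soon as $n \geq \epsilon^{-1} c_0 C_{\lceil s/2 \rceil} \|f^{(2)}\|_\infty$, which is precisely the lower bound on $n$ in the hypothesis. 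Hence $\|y\|_\infty \leq 1-\epsilon < 1$, so an $s$-th order stable scheme from \cite{DD} or \cite{exp_decay} produces signs $\sigma_0,\ldots,\sigma_n \in \{\pm 1\}$ together with a state sequence $u$ satisfying $\|u\|_\infty \leq C_{s,1-\epsilon}$, where the constant depends only on $s$ and $\mu$.

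The total error then splits in the standard way: $|f(x) - \sum_k \sigma_k p_{n,k}(x)| \leq |E^A_n(f,P;x)| + |E^Q_n(y,\sigma;x)|$. The approximation error is handled by \eqref{U-bound}, giving $|E^A_n(f,P;x)| \lesssim_s \|f\|_{C^{s-1}\mathrm{Lip}}\, n^{-s/2}$. The quantization error is handled by combining \eqref{genericbound} with \eqref{Vnrbound} at $r=s$:
$$|E^Q_n(y,\sigma;x)| \leq C_{s,1-\epsilon}\, V_{n,s}(x) \lesssim_{\mu,s}\, n^{-s/2} X^{-s}.$$
Adding the two contributions produces exactly the stated bound.

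The one subtle point, already absorbed into Theorem~\ref{U-stability}, is that the improved approximant $U_{n,\lceil s/2\rceil}(f)$ must have Bernstein coefficients strictly bounded by $1$ in absolute value, since otherwise no $\{\pm 1\}$-scheme of order greater than one is guaranteed to be stable on these inputs; the threshold $n \gtrsim \|f^{(2)}\|_\infty/(1-\mu)$ appearing in the statement is precisely the quantitative condition that forces this. Everything else is a direct substitution of bounds already proved, so no new obstacle is expected.
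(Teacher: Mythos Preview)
Your proposal is correct and follows precisely the approach the paper intends: take $P=U_{n,\lceil s/2\rceil}(f)$, invoke Theorem~\ref{U-stability} with $\epsilon=(1-\mu)/2$ to force $\|y\|_\infty<1$ once $n\gtrsim\|f^{(2)}\|_\infty/(1-\mu)$, run an $s$-th order stable $\{\pm1\}$ scheme, and add the bounds \eqref{U-bound} and \eqref{Vnrbound} (with $r=s$) for the approximation and quantization errors respectively. There is nothing to add; the paper's own proof is exactly this combination of previously established pieces.
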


\begin{theorem} \label{theorem3rd2}
For every continuous function $f:[0,1] \to \R$ and integers $r \geq 0$, $n \geq 1$, there exist $q_0,\dots,q_n \in \Z$ such that
\begin{equation}
\left | f(x) - \sum_{k=0}^n q_k \,p_{n,k}(x) \right | \lesssim_r
E_n(f)_\infty + \min(1,X^{-r}n^{-r/2})
\end{equation}
for all $x \in [0,1]$ where $X:=x(1-x)$. In particular, if $f \in C^s([0,1])$, then for all $n \geq 1$, there exist $q_0,\dots,q_n \in \Z$ such that
\begin{equation}
\left | f(x) - \sum_{k=0}^n q_k \,p_{n,k}(x) \right | \lesssim_s
\|f^{(s)}\|_\infty n^{-s} + \min(1,X^{-2s}n^{-s})
\end{equation}
for all $x \in [0,1]$. 
\end{theorem}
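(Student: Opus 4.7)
The plan is to mirror the proof of Theorem \ref{theorem1st2}, now invoking the general $r$th order quantization error bound \eqref{Vnrbound} just established. The crucial enabling fact is that for $\A = \Z$ the greedy $\Sigma\Delta$ rule is unconditionally stable at every order, with $|u_k| \leq 1/2$ for every input sequence, so we need impose no restriction on the magnitude of the Bernstein coefficients of our classical polynomial approximant.

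For the first statement, given $f \in C([0,1])$ and integers $n, r$, we let $P \in \P_n$ be a polynomial of best uniform approximation to $f$, so that $\|f - P\|_\infty = E_n(f)_\infty$, and expand $P$ in the Bernstein basis as $P(x) = \sum_{k=0}^n y_k \, p_{n,k}(x)$. Next we run the $r$th order greedy $\Sigma\Delta$ scheme with alphabet $\Z$ on the input $y = (y_k)_0^n$, producing $q = (q_k)_0^n \in \Z^{n+1}$ with state sequence satisfying $\|u\|_\infty \leq 1/2$. The generic bound \eqref{genericbound} combined with \eqref{Vnrbound} yields
$$|E^Q_n(y;x)| \leq \tfrac{1}{2}\, V_{n,r}(x) \lesssim_r n^{-r/2}\, X^{-r},$$
and the triangle inequality $|f(x) - \sum_k q_k p_{n,k}(x)| \leq |E^A_n(f;x)| + |E^Q_n(y;x)|$ then delivers the claim.

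For the second statement, assume $f \in C^s([0,1])$. A standard Jackson-type estimate --- via the $s$th modulus of smoothness combined with $\omega_s(f, 1/n) \leq n^{-s} \|f^{(s)}\|_\infty$ --- gives $E_n(f)_\infty \lesssim_s \|f^{(s)}\|_\infty\, n^{-s}$. Applying the first part with the specific choice $r = 2s$ and combining the two error terms yields
$$\left| f(x) - \sum_{k=0}^n q_k\, p_{n,k}(x) \right| \lesssim_s \|f^{(s)}\|_\infty\, n^{-s} + n^{-s}\, X^{-2s} = \bigl(\|f^{(s)}\|_\infty + X^{-2s}\bigr)\, n^{-s}.$$

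There is essentially no obstacle to overcome here: all the heavy lifting has been done by the preceding theorem on $V_{n,r}$ and by classical approximation theory. The only minor points to note are that at the endpoints $x \in \{0,1\}$ the terms $X^{-r}$ and $X^{-2s}$ are infinite (so the bounds hold trivially there), and that small values of $n$ for which the Jackson estimate may fall short can be absorbed into the implicit constant $C_s$ since there are only finitely many such cases.
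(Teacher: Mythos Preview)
Your proposal is correct and follows exactly the approach the paper indicates: the paper's own proof is the single sentence ``the second one by employing the best approximation to $f$ from $\P_n$,'' and you have filled in the details faithfully---best approximant in $\B_n$, greedy $r$th order $\Sigma\Delta$ over $\Z$ (unconditionally stable with $\|u\|_\infty \leq 1/2$), the variation bound \eqref{Vnrbound}, and Jackson with $r=2s$ for the second part.

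One small caveat: your remark that the finitely many small-$n$ cases where Jackson may fail ``can be absorbed into the implicit constant $C_s$'' is not literally correct, since for $n<s$ the quantity $E_n(f)_\infty$ cannot be controlled by $\|f^{(s)}\|_\infty$ alone (take $f(x)=Mx^{s-1}$ with $M$ large). This is really an imprecision in the theorem statement rather than in your argument; the clean fix is to state the second part for $n \geq s$, after which your proof goes through without qualification.
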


\section*{Appendix A: Frame-theoretic redundancy of the Bernstein basis}
\label{sec:numerical}

We have shown in this paper that a large class of functions (including all continuous functions) can be approximated arbitrarily well using polynomials with very coarsely quantized coefficients in a Bernstein basis. While our methodology can be viewed as a variation on the theme of \cite{BPY1} which addresses quantization of finite frame expansions through $\Sigma\Delta$ quantization, we emphasize that we are working with linearly independent systems. As such, the setting of this paper is novel for this type of quantization. The reason why the method works is that the ``effective span'' of the Bernstein basis $\B_n$ is actually roughly of dimension $\sim \sqrt{n}$. Both approximation and noise-shaping quantization take place relative to this subspace of $\P_n$.

In this appendix, we will make the statements in the preceding paragraph more precise by first providing a complete description of the singular values of the linear operator 
$S_n\colon \R^{n+1}\to \P_n$ defined by  
$$
S_n u 
:=\sum_{k=0}^n u_k p_{n,k}, 
$$
where $\R^{n+1}$ is equipped with the standard Euclidean inner product 
and $\P_n$ with the inner product inherited from $L^2([0,1])$, both denoted by $\langle \cdot, \cdot \rangle$. Evidently, this discussion is concerned primarily with $L^2$ approximation rather than uniform approximation. However, the description of the singular values shed additional light onto the Bernstein basis and may be of independent interest to the frame community. 

The adjoint $S_n^*\colon \P_n \to \R^{n+1}$ is given by
$$
(S_n^* f)_k = \langle p_{n,k},f\rangle :=\int_0^1 p_{n,k}(x)f(x)\ dx, ~~k=0,\dots,n. 
$$
In the language of frame theory, $S_n$ and $S_n^*$ are the {\em synthesis} and the {\em analysis operators} for the system $\B_n:=(p_{n,k})_{k=0}^n$, respectively. Then the {\em frame operator} $S_n^*S_n^{}\colon \R^{n+1}\to\R^{n+1}$ is given by 
$$ (S_n^*S_n^{} u)_k = \sum_{\ell=0}^n \langle p_{n,k}, p_{n,\ell}\rangle u_\ell = (\Gamma_n u)_k$$
where $\Gamma_n$ is the Gram matrix of the system $\B_n$ with entries
$$
(\Gamma_n)_{k,\ell}
=\langle p_{n,k}, p_{n,\ell}\rangle, ~~k,l=0,\dots,n. 
$$ 
Note that $(n+1) \Gamma_n$ is doubly stochastic due to the relations $$\sum_{l=0}^n p_{n,l}(x) = 1 \mbox{ and } \int_0^1 p_{n,k}(x)\,dx = \frac{1}{n+1}.$$

To ease the notation in the discussion below, we assume $n \geq 0$ is fixed and suppress the index $n$ from our notation when expressing certain quantities, keeping in mind that they still depend on $n$. This dependence will naturally become explicit in all formulas we will provide. Henceforth we will refer to the operators $S:=S_n$ and $\Gamma:=\Gamma_n$. 

Since the Bernstein system is linearly independent, $\Gamma$ is positive-definite, and in particular, invertible. Let its eigenvalues in decreasing order be $(\lambda_k)_0^n$, i.e.
$$
\lambda_{0} \geq \lambda_{1} \geq \cdots \geq \lambda_{n} >0.
$$
As we will see, the eigenvectors of $\Gamma$ are precisely the discrete Legendre polynomials on $\{0,\dots,n\}$, which we denote by $\varphi_0,\dots, \varphi_n$. (Here, we identify $\R^{n+1}$ with 
$\R^{\{0,\dots,n\}}=L^2(\{0,\dots,n\})$ equipped with the counting measure. We pair $f \in \R^{\{0,\dots,n\}}$ with $(f(0),\dots,f(n)) \in \R^{n+1}$.)
Thus each $\varphi_k(\ell)$ is a polynomial in $\ell\in \{0,\dots,n\}$, of degree equal to $k$, and we have the orthonormality relations
$$ \langle \varphi_k, \varphi_j \rangle = \delta_{k,j}.$$
We will not need an explicit formula for the $\varphi_k$. In order to uniquely define the Legendre polynomials, it is necessary to remove their sign ambiguity according some convention, but we shall not need this either.  

For any $k \geq 0$, let $(t)_k$ denote the falling factorial defined by $(t)_k:=t(t-1)\cdots (t-k+1)$ and  $(t)_0:=1$, where we restrict $t$ to non-negative integers.

\begin{theorem}
	\label{thm:Gammaeig}
	Let $n\geq 0$ be arbitrary, $\Gamma$ be the Gram matrix of the Bernstein system $\B_n$. Then for all $0 \leq k \leq n$, we have 
	$ \Gamma \varphi_k = \lambda_k \varphi_k$ where $\varphi_k$ is the degree $k$ discrete Legendre polynomial on $\{0,\dots,n\}$ and
	$$ \lambda_k = \frac{(n)_k}{(n+k+1)_{k+1}}.$$	
\end{theorem}

\begin{proof}
	Let $\pi_k(\ell):=\ell^k$ (as before) and $\xi_k(\ell):=(\ell)_k$, $\ell \in \{0,\dots,n\}$, $k \geq 0$. Since each $\xi_k$ is a monic polynomial of degree $k$, it is immediate that $\mbox{span}\{\xi_0,\dots,\xi_m\} = \mbox{span}\{\pi_0,\dots,\pi_m\} =: P_m$. (Note that for $m>n$ we have $P_m = P_n = \R^{\{0,\dots,n\}}$, though we will not be concerned with the case $m > n$.)

	Let $0 \leq m \leq n$ be arbitrary. We claim that $\Gamma(P_m) \subset P_m$. For this, it suffices to show that $\Gamma \xi_m$ is a degree $m$ polynomial. We have
	\begin{equation}
		\label{eq:eighelper1}
		(\Gamma \xi_m)(k)
		=\sum_{\ell=0}^n \Gamma_{k,\ell} \,\xi_m(\ell)
		= \Big\langle p_{n,k}, \sum_{\ell=0}^n (\ell)_m\, p_{n,\ell}\Big\rangle, ~~~ k=0,\dots,n.
	\end{equation}
	We can evaluate the polynomial sum in this inner-product via
	\begin{equation}
		\label{poly-Bern}
	\sum_{\ell=0}^n (\ell)_m\, p_{n,\ell}(x) = \sum_{\ell=m}^n m! \binom{\ell}{m} \, p_{n,\ell}(x) =
	m! \binom{n}{m}	x^m = (n)_m\,x^m,
	\end{equation}
	where the first equality uses the fact that $(\ell)_m=0$ for $\ell<m$, and the second equality is an identity generalizing the partition of unity property of Bernstein polynomials (see, e.g. \cite[p.85]{bustamante}).
	Hence, \eqref{eq:eighelper1} can now be evaluated to give
	\begin{eqnarray}
		(\Gamma \xi_m)(k)
		& = & (n)_m \binom{n}{k} \int_0^1 x^{k+m} (1-x)^{n-k}  dx \nonumber \\
		& = & (n)_m \binom{n}{k} \mathrm{B}(k+m+1,n-k+1) \nonumber \\
		& = &\frac{(n)_m\,(k+m)_m}{(n+m+1)_{m+1}},~~~ k=0,\dots,n,
		\label{eq:eighelper3}
	\end{eqnarray}
	where $\mathrm{B}(\cdot,\cdot)$ stands for the beta function.
	The term $(k+m)_m$ is a polynomial in $k$ of degree $m$, hence the claim follows.
	
	As a particular case, we have $\Gamma \varphi_m \in P_m$, so there are $\alpha_0^{(m)},\dots,\alpha_m^{(m)}$ such that 
	$$	\Gamma \varphi_m = \sum_{k=0}^m \alpha_k^{(m)} \varphi_k. 	$$
	In other words, $\Gamma$ is represented by an upper-triangular matrix in the discrete Legendre basis $(\varphi_0,\dots,\varphi_n)$. This fact, coupled with the orthogonality of the discrete Legendre basis and the symmetry of $\Gamma$, actually implies that $\Gamma$ is diagonal in the discrete Legendre basis. Let us spell out how this general principle works:
	
	For the case $m=0$, we readily have
	$\Gamma \varphi_0 = \alpha_0^{(0)} \varphi_0$. 
	For $1 \leq m \leq n$, the claim that $\varphi_m$ is an eigenvector of $\Gamma$ is equivalent to the claim  $\alpha_0^{(m)}=\cdots=\alpha_{m-1}^{(m)}=0$. Employing orthogonality and self-adjointness, we have
	\begin{equation}
		\label{eq:eighelper2}
		\alpha_k^{(m)} = \frac{\langle \Gamma \varphi_m, \varphi_k \rangle}{\langle \varphi_k, \varphi_k\rangle} = \frac{\langle \varphi_m, \Gamma \varphi_k \rangle}{\langle \varphi_k, \varphi_k\rangle}, ~~~k=0,\dots,m.
	\end{equation}
	Having
	established that $\varphi_0,\dots,\varphi_{m-1}$ are all eigenvectors of $\Gamma$, \eqref{eq:eighelper2} coupled with orthogonality of the discrete Legendre basis implies that
	$ \alpha_k^{(m)} = 0$ for all $k=0,\dots,m-1$, implying that $\varphi_m$ is an eigenvector of $\Gamma$.
	
	We proceed to obtain a formula for the eigenvalues of $\Gamma$. 	For $m=0$, the fact that $(n+1)\Gamma$ is doubly stochastic implies that $\Gamma \varphi_0 = \lambda_0 \varphi_0$ with $\lambda_0 = 1/(n+1)$. Let $1\leq m \leq n$ and $\lambda_m$ be the eigenvalue correpsonding to the eigenvector $\varphi_m$.
	Since $\varphi_m$, $\xi_m$, $\pi_m$ are all degree $m$ polynomials and the latter two are monic, there is a scalar $\beta_m \not=0$ such that 
	$$ \varphi_m = \beta_m \xi_m + \gamma_{m-1} = \beta_m \pi_m + \tilde \gamma_{m-1}$$
	where $\gamma_{m-1}, \tilde \gamma_{m-1} \in  P_{m-1}$. Hence $$\lambda_m \varphi_m - \beta_m \Gamma \xi_m = \Gamma (\varphi_m - \beta_m \xi_m) = \Gamma \gamma_{m-1} \in P_{m-1}$$
	so that	
	$$ \lambda_m \pi_m - \Gamma \xi_m 
	= \frac{1}{\beta_m}\Big(\lambda_m (\beta_m \pi_m - \varphi_m) + (\lambda_m\varphi_m - \beta_m \Gamma \xi_m)  \Big)
	\in P_{m-1}.$$ 
	Meanwhile, \eqref{eq:eighelper3} implies 
	$$\Gamma \xi_m - \frac{(n)_m}{(n+m+1)_{m+1}} \pi_m  \in P_{m-1}. $$ 
	Adding the last two vectors, we get 
	$$(\lambda_m - \frac{(n)_m}{(n+m+1)_{m+1}}) \pi_m  \in P_{m-1} $$ 
	which implies that $\lambda_m = \frac{(n)_m}{(n+m+1)_{m+1}}$.
\end{proof}

\paragraph{Remark.} For any $n$, let $\sigma_0,\dots,\sigma_n$ be the singular values of $S$ defined by
$\sigma_m := \sqrt{\lambda_m}$. Define $\psi_m := S \varphi_m / \sigma_m$. Then, as is well known, $\psi_0,\dots,\psi_n$ is an orthonormal basis for the range of $S$, i.e. $\P_n$, equipped with the $L^2$ inner product on $[0,1]$. The relation \eqref{poly-Bern} shows that 
$S\xi_m  \in \P_m$, implying $S(P_m) \subset \P_m$. In particular, $\psi_m \in \P_m$ for each $m=0,\dots,n$. Then the orthonormality of the $\psi_m$ imply that they are simply the continuous Legendre polynomials on $[0,1]$, again up to a sign convention. Note that even though the $\varphi_m$ depend on $n$, the $\psi_m$ do not.

\paragraph{Quantifying redundancy of the Bernstein system.}
In frame theory, the redundancy of a finite frame is often defined as the ratio of the number of frame vectors to the dimension of their span. Clearly this definition is insufficient for the purpose of differentiating between all the ways in which $n$ vectors can span a $d$ dimensional space, let alone for handling infinite dimensional spaces. A particular deficiency of working with the algebraic span is that it treats all bases the same, whether orthonormal or ill-conditioned. Unfortunately a more suitable quantitative definition of redundancy has been elusive. In the case of unit norm tight frames, the ratio $n/d$ is equal to the frame constant, i.e. the unique eigenvalue of the frame operator $SS^*$ (equivalently, the unique non-zero eigenvalue of the Gram matrix), and for near-tight frames, the frame bounds given by the smallest and largest eigenvalues of the frame operator may continue to serve as a rough substitute of redundancy. However, when the eigenvalues are widely dispersed without a well-defined gap, the frame bounds seem to lose their significance. 

As a basis of $\P_n$, the Bernstein system $\B_n$ would not be considered redundant by the classical definition. However, it is ill-conditioned; in fact, we have 
$$\frac{\sigma_0}{\sigma_n} = \sqrt{\frac{\binom{2n+1}{n+1}}{n+1}}$$ 
which is exponentially large in $n$.
Hence the relevant question is: What is the ``effective dimension'' of the span of $\B_n$? One possible answer to this question comes from numerical analysis via the notion of ``numerical rank'' (see e.g. \cite[Ch. 5.4.1]{golub2013matrix}).  This quantity is tied to a given tolerance threshold $\epsilon$  measuring which singular values (and therefore which subspaces) should count as significant. We inspect the singular values against the top singular value $\sigma_0$ and define 
$$
d_n(\epsilon)
:= \max\{0\leq m \leq n \colon \sigma_{m} \geq \epsilon \sigma_0\}
$$
where $\epsilon \in (0,1)$ is a fixed small parameter. With this definition, if we truncate the singular value decomposition of $S$ to define
$$
\tilde S_\epsilon u := \sum_{\sigma_m \geq \epsilon \sigma_0} \sigma_m \langle u, \varphi_m\rangle \psi_m, 
$$
then a straightforward consequence is that
$$ \frac{\|S - \tilde S_\epsilon\|_{\rm op}}{\|S\|_{\rm op}} \leq \epsilon.$$
With this, we can argue that the image of the unit ball under $S$ is well approximated by an ellipsoid within $\mathrm{span}(\psi_0,\dots,\psi_{d_n(\epsilon)})= \P_{d_n(\epsilon)}$. Hence we define the {\em $\epsilon$-redundancy} of $\B_n$ to be $n/d_n(\epsilon)$. Note that this definition can easily be applied to systems other than $\B_n$.

The next task is to obtain asymptotics for $d_n(\epsilon)$. As can be seen from the formula derived in Theorem \ref{thm:Gammaeig}, the singular values of $S$ do not possess any obvious gap. It turns out that the exponentially large condition number of $S$ is highly misleading because the singular values of $S$ do not decay like an exponential. If $\sigma_m$ actually decayed like $\exp(-cm)$, we would find that $d_n(\epsilon) \sim c^{-1} \log (1/\epsilon)$, which is independent of $n$. Instead, we will show below that 
$$ d_n(\epsilon) = C_\epsilon \sqrt{n} (1 + o(1))\mbox{ as }n \to \infty,$$
where $C_\epsilon := \sqrt{2 \log (1/ \epsilon)}$. This result is consistent with the fact that the error of best approximation using $\B_n$ (with bounded coefficients) behaves as if $n$ is replaced by $\sqrt{n}$. In addition, note that the $\epsilon$-redundancy of $\B_n$ given by $n/d_n(\epsilon)$ also behaves like $\sqrt{n}$. This is also consistent with our result that the error bound of the $r$-th order $\Sigma\Delta$ quantization method behaves as $n^{-r/2}$.

\begin{theorem}
	\label{thm:Gammaeig2}
	For all $\epsilon\in (0,1)$, 
	\begin{equation}
		\label{asmptotic-dn}
		\lim_{n \to \infty} \frac{d_n(\epsilon)}{ \sqrt{2n \log (1/\epsilon)}}  = 1.	
	\end{equation}
\end{theorem}

\begin{proof}
	The phrase ``for all sufficiently large $n$'' will occur several times below and will be abbreviated by ``f.a.s.l. $n$.''	Suppose $\epsilon \in (0,1)$ is fixed. Define
	\begin{equation}
		\label{mmprime}
		m_n:= \left \lfloor \sqrt{2(n{+}1) \log (1/\epsilon)} \right \rfloor -1,~~ \mbox{ and }~~ m'_n:= \left \lceil \sqrt{2(n{+}1) \log (1/\epsilon)} \right \rceil.
	\end{equation}
	It is clear that 
	\begin{equation} 
		\label{sandwich-1} 
		\lim_{n \to \infty} \frac{m_n}{ \sqrt{2n \log (1/\epsilon)}} =
		\lim_{n \to \infty} \frac{m'_n}{ \sqrt{2n \log (1/\epsilon)}}  = 1.
	\end{equation}
	As a trivial consequence, $m_n$ and $m'_n$ are in $\{0,\dots,n\}$ f.a.s.l. $n$, so we can meaningfully refer to $\sigma_{m_n}$ and $\sigma_{m'_n}$. The proof will be based on showing that 
	\begin{equation}
		\label{sandwich-alpha} 
		\sigma_{m_n} > \epsilon \sigma_0 > \sigma_{m'_n} ~~
		\mbox{ f.a.s.l. $n$} 
	\end{equation}
	which implies $m_n \leq d_n(\epsilon) < m'_n$ f.a.s.l. $n$, and therefore \eqref{asmptotic-dn} via \eqref{sandwich-1}.

	It will be more convenient to work with the increasing sequence $\alpha_m:= \log ( \lambda_0/\lambda_m)$, $m=0,\dots,n$.
	By Theorem \ref{thm:Gammaeig}, we have 
	$$
	\alpha_m
	= \log \Big( \frac{(n{+}m{+}1)_{m+1}}{(n{+}1)_{m+1}}\Big)
	= \sum_{k=0}^m \left( \log \Big(1{+}\frac{k}{n{+}1}\Big)-\log\Big( 1{-}\frac{k}{n{+}1}\Big) \right) 
	= \sum_{k=0}^n g\Big(\frac{k}{n{+}1} \Big),
	$$
	where $g(t):=\log(1+t) - \log(1-t)$, $|t| < 1$. We have $g(0)=0$, $g'(0)=2$, $g''(0)=0$, and $|g'''(t)| \leq 18$ for $|t| \leq 1/2$, so that $|g(t) - 2t| \leq  3|t|^3$ for $|t| \leq 1/2$ via Taylor's theorem.
	Hence, for all $m \leq \frac{n+1}{2}$, we have
	\begin{equation}
		\label{cubic-bound}
		\left | \alpha_m
		- \frac{m(m{+}1)}{n{+}1} \right|  =
		\left | \sum_{k=0}^m g\Big(\frac{k}{n{+}1} \Big)
		- \sum_{k=0}^m \frac{2k}{n{+}1} \right|  \leq 
		3 \sum_{k=0}^m \Big(\frac{k}{n{+}1} \Big)^3
		< \frac{(m{+}1)^4}{(n{+}1)^3}.
	\end{equation}
	We will utilize \eqref{cubic-bound} to estimate $\alpha_{m_n}$ and $\alpha_{m'_n}$. (Note that
	\eqref{sandwich-1} also implies that $0\leq m_n < m'_n \leq \frac{n{+}1}{2}$ f.a.s.l. $n$.) The definition \eqref{mmprime} readily implies
	\begin{equation}
	\label{sandwich-log}
	 \frac{(m_n{+}1)^2}{n{+}1} \leq \log(1/\epsilon^2) \leq \frac{(m'_n)^2}{n{+}1},
	\end{equation}
	and rearranging these inequalities, we have
	\begin{equation}
	\label{sandwich-log-2}
	\frac{m_n(m_n{+}1)}{n{+}1} \leq 
	\log(1/\epsilon^2) - \frac{m_n{+}1}{n{+}1}
	~~\mbox{ and }~~
	\frac{m'_n(m'_n{+}1)}{n{+}1} \geq 
	\log(1/\epsilon^2) + \frac{m'_n}{n{+}1}.	
	\end{equation}
	The estimate \eqref{cubic-bound} for $m=m_n$ with \eqref{sandwich-log} and \eqref{sandwich-log-2} yields
	\begin{eqnarray}
		\alpha_{m_n} & \leq & \frac{m_n(m_n{+}1)}{n{+}1} + \frac{(m_n{+}1)^4}{(n{+}1)^3}
		\nonumber \\
		& \leq & \log(1/\epsilon^2) - \frac{m_n{+}1}{n{+}1} + \frac{\log^2(1/\epsilon^2)}{n{+}1}
		\nonumber \\
		& < & \log(1/\epsilon^2) ~~\mbox{ f.a.s.l. $n$}
	\end{eqnarray}
	since $m_n \to \infty$.
	Similarly, we also have
	\begin{eqnarray}
	\alpha_{m'_n} & \geq &\frac{m'_n(m'_n{+}1)}{n{+}1} - \frac{(m'_n{+}1)^4}{(n{+}1)^3} \nonumber \\
	& \geq &
	\log(1/\epsilon^2) + \frac{m'_n}{n{+}1} - \frac{2 \log^2(1/\epsilon^2)}{n{+}1} ~~\mbox{ f.a.s.l. $n$}, \nonumber \\
	& > & \log(1/\epsilon^2) ~~\mbox{ f.a.s.l. $n$},
	\end{eqnarray} 
	where in the second last step we used the observation that 
	$(m'_n+1)^4 \leq 2(m_n+1)^4$ f.a.s.l. $n$.
	Hence we have shown
	$$\alpha_{m_n} < \log(1/\epsilon^2) < \alpha_{m'_n} ~~\mbox{ f.a.s.l. $n$},	$$
	which is equivalent to \eqref{sandwich-alpha}.
\end{proof}

\section*{Appendix B: One-bit neural networks}

Universal approximation properties of feedforward neural networks are well-known (see \cite{cybenko1989approximation} as well as the more recent treatments including \cite{yarotsky2017error,shaham2018provable,lu2021deep,bolcskei2019optimal,daubechies2021nonlinear} and the extensive survey \cite{devore2021neural}). One of the motivations of this work has been to investigate the approximation potential of feedforward neural networks with the additional constraint that their parameters are coarsely quantized, at the extreme using only two values, hence the term ``one-bit''. Quantized networks have been studied from a variety of perspectives (see e.g. \cite{courbariaux2015binaryconnect,guo2018survey,ashbrock2021stochastic,lybrand2021greedy}). As far as we know, a universal approximation theorem using one-bit, or even fixed-precision multi-bit networks was missing. In this appendix we will show how this can be done using our results on polynomial approximation in the Bernstein system. In particular, employing the quadratic activation unit $s(x) := \frac{1}{2}x^2$, we will show that it is possible to implement the polynomial approximations of this paper using standard feedforward neural networks whose weights are chosen from $\{\pm 1\}$ only. Due to the scope and focus of this paper, we will limit our discussion to univariate functions.
A much more comprehensive study of these networks covering multivariate functions, ReLU networks, and information theoretic considerations can be found in our separate manuscript \cite{onebitNN}. 

A feedforward neural network can be described in a variety of general formulations (see e.g. \cite{devore2021neural}). The architecture of the network is determined by a directed acyclic graph. The vertices of this graph (called nodes of the network) consist of input, output, and hidden vertices. Vertices are assigned variables (independent or dependent depending upon whether the vertex is an input or not) and edges are assigned weights. Every node which is not an input node computes a linear combination of the variables assigned to the nodes that are connected to it (noting the directedness of the graph) using the weights associated to the connecting edges, followed by (except for the output nodes) an application of a given {\em activation function} $\rho$, possibly subject to a shift of its argument (called the bias). An important class of networks are {\em layered}, meaning that the nodes of the network can be partitioned into subsets (layers) such that the nodes in each layer have incoming edges only from one other (the previous) layer and outgoing edges into another (the next) layer.

Having introduced the basic terminology of feedforward neural networks, let us now turn the polynomial approximation method of this paper into a neural network approximation with $\pm 1$ weights.
There are two critical ingredients in our construction of these networks. The first one is, of course, the fact that we are able to approximate functions $f$ (where $\|f\|_\infty \leq 1$) by $\pm 1$-linear combinations of the elements of the Bernstein system $\B_n$. The second critical ingredient is that the basis polynomials $p_{n,k}$ have a rich hierarchical structure that enables them to be computed recursively, and with few simple arithmetic operations. Indeed, the $p_{n,k}$ satisfy the elementary recurrence relation
\begin{equation}
	p_{n,k}(x) = \left \{
	\begin{array}{ll}
		(1-x)p_{n-1,0}(x), & k=0,\\
		x p_{n-1,k-1}(x)+ (1-x)p_{n-1,k}(x), & 0 < k < n, \\
		x p_{n-1,n-1}(x), & k=n,
	\end{array}
	\right.
\end{equation}
which follows readily from the simple combinatorial identity $\binom{n}{k} = \binom{n-1}{k-1} + \binom{n-1}{k}$. (The endpoint cases of $k=0$ and $k=n$ can actually be eliminated if we use the earlier convention  $p_{n,j}(x) = 0$ if $j < 0$ or $j>n$.)
Since this combinatorial identity is the basis of the Pascal triangle, we will refer to the resulting tree structure of the Bernstein basis polynomials as the {\em Pascal-Bernstein tree}, which is depicted in the top triangular portion of the graph in Figure \ref{fig:PB}. 
The bottom portion of this graph shows how these basis polynomials are combined with weights $\sigma_k \in \{\pm 1\}$, $k=0,\dots,n$. The red edges  of the tree correspond to multiplication by $1-x$ and the blue edges correspond to multiplication by $x$. Note that this schematic diagram is not a proper feedforward neural network yet because the weights depend on the input $x$. However, it is readily implementable by a {\em sum-product network} \cite{poon2011sum} where the nodes either multiply or compute a weighted sum their inputs. In our case, the weights consist of ${\pm 1}$ only. The input to the network could be $x$ and $\bar x:=1-x$.
\begin{figure}[th]
	\centerline{\includegraphics[width=0.5\textwidth]{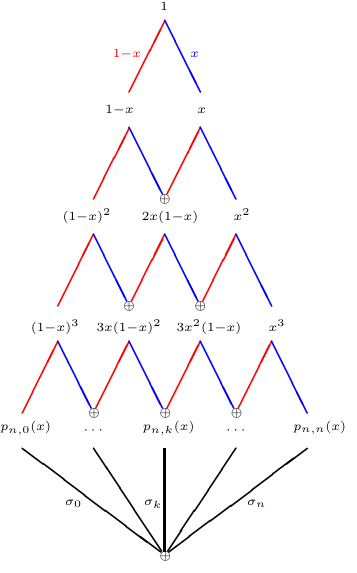}}
	\caption{Schematic diagram of the Pascal-Bernstein tree and the associated one-bit polynomial approximations in the Bernstein basis. Red edges of the tree correspond to multiplication by $1-x$ and blue edges by $x$.} 
	\label{fig:PB} 
\end{figure}
Regardless of the nature of this network (or proto-network), the Pascal-Bernstein tree portion of the algorithm is universal in the sense that it is the same for every function $f$ to be approximated. All of the information about how $f$ is approximated is encoded in the weights $(\sigma_k)_0^n$ connecting the bottom layer to the final output.

In order to turn this schematic diagram into a proper feedforward neural network, it suffices to convert its multiplications (by $1-x$ and $x$) into a standard neural network operation. We show below that this can be done using $\pm1$ weights in the simplest way if the activation unit is given by the quadratic function $s(u):=\frac{1}{2}u^2$. Indeed, with this unit, the multiplication of any two quantities $a$ and $b$ can be implemented simply as
$$ ab = s(a+b) - s(a) - s(b).$$
Then, for any $m=1,\dots,n$, the quantities $p_{m{-}1,j}(x)$, $j=0,\dots,m-1$, associated with the vertices in the $m$th layer of this tree satisfy the relations
\begin{equation}\label{recurse-Bernstein-network}
	p_{m,k}(x) = \left \{
	\begin{array}{ll}
		s(1-x+p_{m-1,0}(x)) - s(1-x) - s(p_{m-1,0}(x)), & k=0,\\
		\begin{array}{l}\Big (s(x+p_{m-1,k-1}(x)) - s(x) - s(p_{m-1,k-1}(x))  \\
		+\, s(1-x+p_{m-1,k}(x)) - s(1-x) - s(p_{m-1,k}(x))\Big)
		\end{array}&  0 < k < m, \\
		s(x+p_{m-1,m-1}(x)) - s(x) - s(p_{m-1,m-1}(x)), & k=m.
	\end{array}
	\right.
\end{equation}
This representation shows that the $p_{m,k}(x)$ will not actually correspond to the physical outputs of the nodes of our neural network, but rather to certain intermediate $\pm 1$ linear combinations of the outputs of up to $6$ of its nodes. If $m < n$, we add $x$ or $1-x$ to these before passing them onto subsequent nodes. If $m=n$, we take a final $\pm 1$ linear combination using the coefficients $\sigma_k$. 

Let us describe the nodes of our network more precisely. For each $m=0,\dots,n-1$, there will be a layer whose nodes output the functions $X_{m,k}$, $Y_{m,k}$, $Z_{m,k}$, $k=0,\dots,m$, where we would like
$$ X_{m,k}(x) = s(p_{m,k}(x)), ~~~
Y_{m,k}(x) = s(1-x+p_{m,k}(x)), ~~~
Z_{m,k}(x) = s(x+p_{m,k}(x)).$$
Therefore, we set $U:= s(1-x)$ and $V:=s(x)$ and define these functions for $m=1,\dots,n-1$ via the recurrence 
\begin{eqnarray}
	X_{m,k} & := & \left \{
	\begin{array}{ll}
		s(Y_{m-1,0}-U-X_{m-1,0}), & k=0,\\
		s(Z_{m-1,k-1}-V-X_{m-1,k-1}+Y_{m-1,k}-U-X_{m-1,k}),&  0 < k < m, \\
		s(Z_{m-1,m-1}-V-X_{m-1,m-1}), & k=m,
	\end{array}
	\right. \nonumber \\
	Y_{m,k} & := & \left \{
	\begin{array}{ll}
		s(1-x+Y_{m-1,0}-U-X_{m-1,0}), & k=0,\\
		s(1-x+Z_{m-1,k-1}-V-X_{m-1,k-1}+Y_{m-1,k}-U-X_{m-1,k}),&  0 < k < m, \\
		s(1-x+Z_{m-1,m-1}-V-X_{m-1,m-1}), & k=m,
	\end{array}
	\right. \nonumber \\
	Z_{m,k} & := & \left \{
	\begin{array}{ll}
		s(x+Y_{m-1,0}-U-X_{m-1,0}), & k=0,\\
		s(x+Z_{m-1,k-1}-V-X_{m-1,k-1}+Y_{m-1,k}-U-X_{m-1,k}),&  0 < k < m, \\
		s(x+Z_{m-1,m-1}-V-X_{m-1,m-1}), & k=m.
	\end{array}
	\right. \nonumber 	
\end{eqnarray}
We also set $X_{0,0} := s(1)$, $Y_{0,0} := s(1-x+1)$, $Z_{0,0}:=s(x+1)$. 

The final output of the network is the function $\displaystyle f_{\mathrm{NN}}(x)$ given by $\sum_k \sigma_k p_{n,k}(x)$. The $p_{n,k}(x)$ are still available indirectly through \eqref{recurse-Bernstein-network} for $m=n$, so we define 
\begin{eqnarray}
	f_\mathrm{NN} & := & \sigma_0 Y_{n-1,0} - \sigma_0 U - \sigma_0 X_{n-1,0}  \nonumber \\
	& & + \,\sum_{k=1}^{n-1} (\sigma_k Z_{n-1,k-1} - \sigma_k V - \sigma_k X_{n-1,k-1} + \sigma_k Y_{n-1,k} - \sigma_k U - \sigma_k X_{n-1,k}) \nonumber \\
	& & + \,\sigma_{n-1} Z_{n-1,n-1} - \sigma_{n-1} U - \sigma_{n-1} X_{n-1,n-1}.	
	\label{output-layer}
\end{eqnarray}

We note that this expression contains two copies of the
$X_{n-1,j}$, $j=0,\dots,n-1$, each carrying a $\pm 1$ weight. If these weights were to be combined, then we would get a new weight in the set $\{-2,0,2\}$. This inconsistency can easily be removed by duplicating the nodes that produce $X_{n-1,j}$. Of course, the same comment applies to $U$ and $V$ which would need to be copied $n$ times. A copy of a node is easily created using the identity
$$ a = s(a+1) - s(a) - s(1).$$

The appearance of $U$ and $V$ in each step of the recurrence means that the network is not completely layered, containing some ``skip'' connections. However, the copying mechanism above can also be used as a repeater to remove these skip connections; see Figure \ref{fig:repeater}. All of these additional operations can also be avoided by means of {\em special networks} (e.g. \cite{daubechies2021nonlinear}) where it is permissible to create certain channels to push forward input values or other intermediate computations, or by allowing for more than one type of activation function to be used at the nodes (e.g. the identity function).

\begin{figure}[th]
	\centerline{\includegraphics[width=0.5\textwidth]{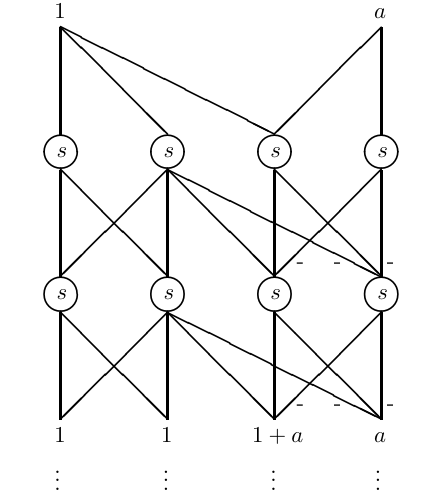}}
	\caption{Demonstration of how the output $a$ of any node or the input $a\in\{x,1\}$ of the network can be repeated to be turned an input to any node in a subsequent layer. Two layers of repetitions are shown.} 
	\label{fig:repeater} 
\end{figure}

Finally, we note that our network takes as input both $x$ and $1$ (or alternatively, $x$ and $1-x$). This choice has allowed us to avoid the use of any bias values associated to the activation function.

\newpage

\end{document}